\newcommand{\N}{\mathbb{N}}
\newcommand{\R}{\mathbb{R}}
\newcommand{\alg}{\textsc{Alg}}
\newcommand{\opt}{\textsc{Opt}}
\newcommand{\deliver}{\textsc{deliver\_and\_return}}
\newcommand{\schedule}{\textsc{follow\_schedule}}
\newcommand{\waituntil}{\textsc{wait\_until}}
\newcommand{\wait}{\textsc{Lazy}}
\newcommand{\lazy}{\textsc{Lazy}}
\renewcommand{\o}{O} 
\newcommand{\ti}{t^{(i)}}
\newcommand{\tip}{t^{(i+1)}}
\renewcommand{\epsilon}{\varepsilon}
\spnewtheorem{observation}{Observation}{\bfseries}{\itshape}
\pgfplotsset{compat=1.14}
\pgfplotsset{every x tick label/.append style={font=\small, yshift=0.0ex}}
\pgfplotsset{every y tick label/.append style={font=\small, yshift=0.0ex}}
\newdimen\LineSpace
\tikzset{
	line space/.code={\LineSpace=#1},
	line space=10pt
}
\tikzset{
	schraffiert/.style={pattern=horizontal lines,pattern color=#1},
	schraffiert/.default=black
}
\tikzstyle{densely dashed}=          [dash pattern=on 6pt off 2pt]
\tikzstyle{densely shadow}=          [dash pattern=on 6.5pt off 1.5pt]
\begin{document}
	\title{Tight analysis of the lazy algorithm for open online dial-a-ride\thanks{Supported by DFG grant DI 2041/2.}}
	%
	%
	\author{Júlia Baligács\inst{1}\orcidID{0000-0003-2654-149X} \and
		Yann Disser\inst{1}\orcidID{0000-0002-2085-0454} \and
		Farehe Soheil\inst{1}\orcidID{0000-0002-0504-8834} \and
		David Weckbecker\inst{1}\orcidID{0000-0003-3381-058X}}
	\authorrunning{J.~Baligács, Y.~Disser, F.~Soheil, and D.~Weckbecker}
	%
	\institute{TU Darmstadt, Germany
		\email{\{baligacs|disser|soheil|weckbecker\}@mathematik.tu-darmstadt.de}}
	\maketitle              
	\begin{abstract}
		In the open online dial-a-ride problem, a single server has to deliver transportation requests appearing over time in some metric space, subject to minimizing the completion time. We improve on the best known upper bounds on the competitive ratio on general metric spaces and on the half-line, for both the preemptive and non-preemptive version of the problem. We achieve this by revisiting the algorithm $\textsc{Lazy}$ recently suggested in [WAOA, 2022] and giving an improved and tight analysis. More precisely, we show that it has competitive ratio $2.457$ on general metric spaces and $2.366$ on the half-line. 
This is the first upper bound that beats known lower bounds of 2.5 for schedule-based algorithms as well as the natural $\textsc{Replan}$ algorithm.
		
		\keywords{online algorithms \and dial-a-ride \and competitive analysis.}
	\end{abstract}

	\section{Introduction}
	\label{sec:typesetting-summary}
	
	In the open online dial-a-ride problem, we are given a metric space $(M,d)$ and have control of a server that can move at unit speed.
	Over time, \emph{requests} of the form $(a,b;t)$ arrive.
	Here, $a\in M$ is the \emph{starting position} of the request, $b\in M$ is its \emph{destination}, and $t\in\R_{\geq0}$ is the \emph{release time} of the request. We consider the online variant of the problem, meaning that the server does not get to know all requests at time~$0$, but rather at the respective release times.
	Our task is to control the server such that it serves all requests, i.e., we have to move the server to position~$a$, load the request $(a,b;t)$ there after its release time~$t$, and then move to position~$b$ where we unload the request. The objective is to minimize the \emph{completion time}, i.e., the time when all requests are served.

	We assume that the server always starts at time~$0$ in some fixed point, which we call the \emph{origin} $\o\in M$.
	The server has a \emph{capacity} $c\in(\N\cup\{\infty\})$ and  is not allowed to load more than~$c$ requests at the same time.
	Furthermore, we consider the \emph{non-preemptive} version of the problem, that is, the server may not unload a request preemptively at a point that is not the request’s destination.
	In the dial-a-ride problem, a distinction is made between the open and the closed variant.
	In the \emph{closed} dial-a-ride problem, the server has to return to the origin after serving all requests.
	By contrast, in the \emph{open} dial-a-ride problem, the server may finish anywhere in the metric space.
	In this work, we only consider the open variant of the problem.
	By letting $a=b$ for all requests $(a,b;t)$, we obtain the \emph{online travelling salesperson problem (TSP)} as a special case of the dial-a-ride problem.
	
	In this work, we only consider deterministic algorithms for the online dial-a-ride problem. As usual in competitive analysis, we measure the quality of a deterministic algorithm by comparing it to an optimum offline algorithm.
	The measure we apply is the completion time of a solution.
	For a given sequence of requests~$\sigma$ and an algorithm $\alg$, we denote by $\alg(\sigma)$ the completion time of the algorithm for request sequence~$\sigma$.
	Analogously, we denote by $\opt(\sigma)$ the completion time of an optimal offline algorithm.
	For some $\rho\geq1$, we say that an algorithm $\alg$ is \emph{$\rho$-competitive} if, for all request sequences~$\sigma$, we have $\alg(\sigma)\leq\rho\cdot\opt(\sigma)$.
	The \emph{competitive ratio} of $\alg$ is defined as ${\inf\{\rho\geq1\mid\alg\textrm{ is }\rho\textrm{-competitive}\}}$. The \emph{competitive ratio of a problem} is defined as $\inf\{\rho\geq1\mid\textrm{there is some $\rho$-competitive algorithm}\}$
	
	\subsubsection*{Our results.}
	
	We consider the parametrized algorithm $\lazy(\alpha)$ that was presented in \cite{waoa} and prove the following results (see Table~\cref{resultstable}).

	\begin{table}[t]
		\begin{center}
			\begin{tabular}{p{2.5cm}cccc}
				\toprule
				\multirow{2}{*}{metric space} & & \multicolumn{2}{c}{old bounds} & \,\,new bounds\,\,\\\cline{3-4}
				& & lower & upper & upper\\
				\midrule
				\multirow{2}{*}{general}&\,\,non-preemptive\,\,& 2.05 & \textbf{2.618}~\cite{waoa}&\textbf{2.457}~\smaller{(Thm~\ref{mainthm})}\\ & preemptive & 2.04 &2.618 & 2.457\\
				\hline
				\multirow{2}{*}{line}&non-preemptive&\,\,\textbf{2.05}~\cite{BirxDisser/22}\,\,&2.618&2.457\\&preemptive & \textbf{2.04}~\cite{BjeldeDisserHackfeldEtal/20}&\,\,\textbf{2.41}~\cite{BjeldeDisserHackfeldEtal/20}\,\, & ---\\
				\hline
				\multirow{2}{*}{half-line}&non-preemptive & \textbf{1.9}~\cite{Lipmann/03}&2.618 & \,\,\textbf{2.366}~\smaller{(Thm~\ref{half-line_result})}\,\,\\&preemptive & \textbf{1.62}~\cite{Lipmann/03}&2.41&2.366\vspace{-\aboverulesep}\\
				\bottomrule
			\end{tabular}
		\end{center}
		\caption{State of the art of the open online dial-a-ride problem and overview of our results: Bold bounds are original results, other bounds are inherited.}\label{resultstable}\vspace{-2.6em}
	\end{table}
	
	Our main result is an improved general upper bound for the open online dial-a-ride problem.
	
	\begin{theorem}\label{mainthm}
		For $\alpha=\frac{1}{2}+\sqrt{11/12}$, $\lazy(\alpha)$ has a competitive ratio of\linebreak $\alpha+1\thickapprox2.457$ for open online dial-a-ride on general metric spaces for every capacity $c\in\N\cup\{\infty\}$.
	\end{theorem}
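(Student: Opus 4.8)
The plan is to write $\alg := \lazy(\alpha)$ and analyze its completion time through the \emph{final schedule} it executes: if this schedule starts at time $t_s$ from the origin $\o$ and follows a route of length $L$, then $\alg(\sigma) = t_s + L$, and it suffices to show $t_s + L \le (\alpha+1)\,\opt(\sigma)$. Two lower bounds drive the argument. First, $\opt(\sigma) \ge \ell(\sigma)$, where $\ell(\cdot)$ denotes the length of a shortest open route from $\o$ serving a given set of requests while ignoring release times; second, for every request $r = (a_r,b_r;t_r)$ one has $\opt(\sigma) \ge t_r + \dist(a_r,b_r)$, and in particular $\opt(\sigma) \ge t_{\max}$ for the last release time $t_{\max}$. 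Since the request set $S_f$ served by the final schedule is a subset of all requests and $\ell$ is monotone, the final route satisfies $L = \ell(S_f) \le \ell(\sigma) \le \opt(\sigma)$; hence the entire difficulty lies in bounding the start time $t_s$.

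To bound $t_s$ I would exploit the laziness rule of $\lazy(\alpha)$, which commits the server to a schedule only once the elapsed time is large enough relative to the schedule's length. This yields a dichotomy. In the first case the start of the final schedule is governed directly by the threshold, so $t_s \le (\alpha-1)\,\ell(S_f)$ and therefore $\alg(\sigma) \le \alpha\,\ell(S_f) \le \alpha\,\opt(\sigma)$, comfortably within the claim. In the second, harder case the final schedule is forced to start late because the server was still completing the previous schedule when the request triggering the final phase was released. Here I would trace back a single phase: let the previous schedule start at time $\tau$ with route length $\lambda$, and let $t_r$ be the release time of the triggering request, so that $t_s \le \max\{\tau + \lambda + D,\, t_r\}$, where $D$ accounts for any detour needed to reach $\o$. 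I would then bound $\tau$ again by the previous phase's laziness threshold, $\lambda$ by $\opt(\sigma)$, and $t_r$ (together with $\dist(a_r,b_r)$) by $\opt(\sigma)$.

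Assembling these estimates produces two competing upper bounds on $\alg(\sigma)/\opt(\sigma)$: one increasing in $\alpha$ — the price of waiting, which is exactly what forces the additive $\alpha+1$ — and one decreasing in $\alpha$ — the price of being caught mid-schedule by a late request, which a larger laziness threshold mitigates. The optimal parameter balances the two, and I expect this balance to reduce to $\alpha - 1 = \tfrac{2}{3\alpha}$, i.e. $\alpha^2 - \alpha - \tfrac{2}{3} = 0$, equivalently $3\alpha^2 - 3\alpha - 2 = 0$, whose relevant root is $\alpha = \tfrac{1}{2} + \sqrt{11/12}$; at this value both bounds equal $\alpha + 1 \approx 2.457$.

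The step I expect to be the main obstacle is the second case under non-preemption and finite capacity $c$. Because a request may not be unloaded before its destination, interrupting or closing a schedule first requires delivering the up to $c$ loaded requests and returning to $\o$ (the role of the \deliver{} routine), and the resulting detours must be charged against $\opt(\sigma)$ rather than allowed to accumulate. I would control them by triangle-inequality arguments showing that each detour is already paid for by the length of the schedule being followed together with the bound $\opt(\sigma) \ge t_r + \dist(a_r,b_r)$, so that the per-phase overhead is independent of $c$ — this is what makes the final ratio uniform over all $c \in \N \cup \{\infty\}$. Finally, I would argue that only the last one or two phases can be binding, since the laziness rule guarantees that by the end of each phase the server has ``caught up'' to the offline cost; this terminates the backward recursion and makes the two-case balance above exhaustive.
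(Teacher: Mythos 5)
Your proposal rests on a false premise about how $\lazy(\alpha)$ operates: the final schedule does \emph{not} in general start at the origin. A schedule $S^{(i+1)}$ starts at $\o$ only if $S^{(i)}$ was interrupted (the easy case, Observation~\ref{interrupted}); when the server \emph{cannot} deliver its loaded requests and return to $\o$ by time $\alpha\cdot\opt(t)$, the next schedule starts wherever $S^{(i)}$ ended, at a point $p^{(i+1)}$ possibly far from $\o$. This breaks your first paragraph: the final route length is $|S(R^{(k)},p^{(k)})|$, not $\ell(S_f)$, so the bound $L\le\opt(\sigma)$ does not follow from monotonicity of $\ell$ --- it is exactly Lemma~\ref{waoa} of the paper, whose proof needs the induction hypothesis that $S^{(i)}$ is $\alpha$-good \emph{and} a case distinction on the order in which $\opt$ serves requests. (Relatedly, you have the role of $\deliver$ inverted: interruption is the harmless case, and the difficulty arises precisely when returning to $\o$ is unaffordable. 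Also, in your ``threshold'' case the algorithm waits until $t=\alpha\cdot\opt(t)$, not until $(\alpha-1)\ell(S_f)$, so the correct conclusion there is $\alg\le(1+\alpha)\opt$, not $\alpha\,\opt$.)

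More importantly, the core of the argument is missing. You correctly guess the balancing equation $\alpha-1=\frac{2}{3\alpha}$, but the per-phase bounds you describe (previous schedule length $\le\opt$, triggering release time $\le\opt$, detour $D$ charged by triangle inequality) are exactly the ingredients of the earlier analysis in \cite{waoa} and can only yield $\varphi+1\approx 2.618$; they cannot produce the coefficient $\frac{2}{3}$. In the paper that coefficient comes from a case analysis on whether $\opt[t^{(i+1)}]$ serves $r^{(i)}_{l,\wait}$ (delivered at $p^{(i+1)}$) before or after collecting the first new request (Lemmas~\ref{wrongorder}--\ref{rightorder3}), from Lemma~\ref{rightorder2} (which forces $p^{(i)}=\o$ and $t^{(i)}=\alpha\cdot\opt(t^{(i)})$ in the critical case), and, in the hardest case, from the geometric inequality
\begin{equation*}
|S^{(i)}|+T \;\geq\; 2\,d\bigl(p^{(i+1)},b^{(i)}_{l,\opt}\bigr) + d\bigl(\o,b^{(i)}_{l,\opt}\bigr) + d\bigl(\o,p^{(i+1)}\bigr) \;\geq\; 3\,d\bigl(p^{(i+1)},b^{(i)}_{l,\opt}\bigr),
\end{equation*}
which holds because $S^{(i)}$ visits $b^{(i)}_{l,\opt}$ before $p^{(i+1)}$ while $\opt$ visits these two points in the opposite order ($T$ being the time $\opt$ needs to finish all of $R^{(i)}$). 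This factor-$3$ trade-off is what yields $t^{(i+1)}+|S^{(i+1)}|\le\bigl(2+\frac{2}{3\alpha}\bigr)\opt(t^{(i+1)})$ and hence the root $\alpha=\frac{1}{2}+\sqrt{11/12}$. Without an argument of this kind, your two ``competing bounds'' do not close at $2.457$, so the proposal as written has a genuine gap.
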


    Prior to our work, the best known general upper bound of $\varphi+1\thickapprox 2.618$ on the competitive ratio for the open online dial-a-ride problem was achieved by $\lazy(\varphi)$ and it was shown that $\lazy(\alpha)$ has competitive ratio at least ${\frac{3}{2}+\sqrt{11/12}\thickapprox 2.457}$ for any choice of~$\alpha$, even on the line~\cite{waoa}. 
	This means that we give a conclusive analysis of $\lazy(\alpha)$ by achieving an improved upper bound that tightly matches the previously known lower bound.
	In particular,\linebreak $\alpha=1/2+\sqrt{11/12}$ is the (unique) best possible waiting parameter for $\lazy(\alpha)$, even on the line.
    The best known general lower bound remains~$2.05$~\cite{BirxDisser/22}.
	
	Crucially, our upper bound beats, for the first time, a known lower bound of~$2.5$ for the class of \emph{schedule-based} algorithms~\cite{Birx/20}, i.e., algorithms that divide the execution into subschedules that are never interrupted.
	Historically, all upper bounds, prior to those via $\lazy$, were based on schedule-based algorithms~\cite{BirxDisser/20,BirxDisser/22}.
	Our result means that online algorithms cannot afford to irrevocably commit to serving some subset of requests if they hope to attain the best possible competitive ratio.
	
	Secondly, our upper bound also beats the same lower bound of~$2.5$ for the $\textsc{ReOpt}$ (or $\textsc{Replan}$) algorithm~\cite{Ausiello/01}, which simply reoptimizes its solution whenever new requests appear.
	While this algorithm is very natural and may be the first algorithm studied for the online dial-a-ride problem, it has eluded tight analysis up to this day. 
	So far, it has been a canonical candidate for a best-possible algorithm.
	We finally rule it out. 
	
	In addition to the general bound above, we analyze $\lazy(\alpha)$ for open online dial-a-ride on the half-line, i.e., where $M=\R_{\geq 0}$, and show that, in this metric space, even better bounds on the competitive ratio are possible for different values of~$\alpha$.
	More precisely, we show the following.
	
	\begin{theorem}\label{half-line_result}
		For $\alpha=\frac{1+\sqrt{3}}{2}$, $\lazy(\alpha)$ has a competitive ratio of $\alpha+1\thickapprox2.366$ for open online dial-a-ride on the half-line for every capacity ${c\in\N\cup\{\infty\}}$.
	\end{theorem}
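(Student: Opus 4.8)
The plan is to mirror the analysis behind \cref{mainthm} and then isolate the single geometric estimate that the half-line improves. Fix an arbitrary request sequence $\sigma$ and consider the execution of $\lazy(\alpha)$ on it. Since the objective is the completion time, it is governed by the \emph{last} schedule that the algorithm commits to: let this schedule start at time $t^\ast$ from position $p^\ast$ and have length $\ell^\ast$, so that the completion time equals $t^\ast+\ell^\ast$. The whole proof reduces to bounding $t^\ast$ and $\ell^\ast$ separately against $\opt(\sigma)$ and balancing the two contributions through the choice of $\alpha$.

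For the waiting term I would invoke the defining waiting rule of $\lazy(\alpha)$: the server begins a schedule only once the elapsed time has reached an $\alpha$-fraction of the currently available lower bound on the optimum, which gives $t^\ast\le\alpha\cdot\opt(\sigma)$ (more precisely, $\alpha$ times the restricted optimum of the requests known by time $t^\ast$, which is at most $\opt(\sigma)$). This estimate is metric-independent and can be taken essentially verbatim from the general analysis. The release times of the requests triggering the last schedule enter exactly here: if that schedule is forced by a request released late and located far out on the half-line, then $t^\ast$ is large, but $\opt(\sigma)$ is correspondingly large, which is what makes the bound tight.

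The place where the half-line helps is the bound on $\ell^\ast$. On a general metric, the remaining requests may be spread out so that any schedule from $p^\ast$ must traverse a large distance in several directions, and the triangle inequality only yields the weaker constant underlying \cref{mainthm}, where the tight choice satisfies $\alpha^2=\alpha+\tfrac23$. On the half-line $M=\R_{\ge0}$, all request positions lie on one side of the origin, so the final schedule reduces to an essentially one-directional excursion: the key step is a lemma bounding $\ell^\ast$ by the rightmost coordinate $R$ that still has to be visited plus a controlled additive term, and then charging $R$ to $\opt(\sigma)$ via the fact that the offline server also starts at $\o=0$ and must reach $R$, hence $\opt(\sigma)\ge R$. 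Because there is no leftward excursion to pay for, this lemma saves exactly the gap between the two settings and leads to the sharper quadratic $\alpha^2=\alpha+\tfrac12$ (equivalently $2\alpha^2=2\alpha+1$), whose relevant root is $\alpha=\frac{1+\sqrt3}{2}$, giving competitive ratio $\alpha+1\thickapprox2.366$.

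Finally, I would combine the two estimates in a short case distinction according to whether the bottleneck for the last schedule is a release time (waiting term dominates) or the geometry (travel term dominates), and check that in both extremes $t^\ast+\ell^\ast\le(\alpha+1)\,\opt(\sigma)$ precisely when $\alpha^2=\alpha+\tfrac12$. The main obstacle I anticipate is the $\ell^\ast$-lemma: one must respect the dial-a-ride precedence constraints (each request is picked up before it is delivered) and the capacity bound $c$ while still extracting the one-sided, release-time-aware structure of the half-line, and simultaneously keep track of which requests are responsible for $t^\ast$ versus $\ell^\ast$ so that the same $\opt(\sigma)$ can absorb both terms without double counting.
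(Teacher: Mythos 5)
There is a genuine gap, and it sits at the center of your argument: the inequality $t^\ast\le\alpha\cdot\opt(\sigma)$ is false. The waiting rule of $\lazy(\alpha)$ only yields the reverse direction, $t^\ast\ge\alpha\cdot\opt(t^\ast)$: a schedule starts as soon as the server is \emph{idle} and $t\ge\alpha\cdot\opt(t)$, but the server need not be idle at that moment. If the previous schedule $S^{(k-1)}$ is not interrupted (interruption only happens when the loaded requests can be delivered and the origin reached by time $\alpha\cdot\opt(t)$), then the last schedule starts at $t^\ast=t^{(k-1)}+|S^{(k-1)}|$, which even under the inductive $\alpha$-goodness bound is only at most $(1+\alpha)\cdot\opt(t^{(k-1)})$, and $\frac{1+\alpha}{\alpha}>\alpha$ for the relevant $\alpha\approx1.366$. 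A sanity check shows the claim cannot be repaired: if both of your bounds ($t^\ast\le\alpha\cdot\opt$ and $\ell^\ast\le\opt$) held, the ratio $1+\alpha$ would follow for \emph{every} $\alpha\ge1$ with nothing left to balance (decreasing $\alpha$ would only help), contradicting \cref{thm:lower_bound_alpha-geq-1}, which shows a lower bound of $2+\frac{1}{2\alpha}>1+\alpha$ for $\alpha\in[1,1.366)$, and \cref{thm:lower_bound_alpha-leq-1}. Consequently, the quadratic $2\alpha^2=2\alpha+1$ cannot arise from the waiting-versus-travel trade-off you describe; in the paper it arises exactly from the delayed-start case that your proposal skips over, and handling that case is the entire content of the induction and its case analysis.

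Relatedly, you locate the half-line gain in the wrong place. The bound on the length of the last schedule, $|S^{(k)}|\le\opt(t^{(k)})$, is \cref{waoa} and is metric-independent (it holds on general metric spaces for $\alpha\ge\frac{1+\sqrt{17}}{4}$); no rightmost-point argument is needed there, and improving it would not change the constant. The paper instead proves by induction that every schedule is $\alpha$-good, splits the non-interrupted case according to whether $\opt[t^{(k)}]$ picks up its first new request before or after serving the algorithm's last old request, and the half-line enters only in the final subcase (\cref{rightorder3_half_line}): the one-sided geometry, via $d(x,y)\le\max\{d(x,\o),d(y,\o)\}$, improves the bound on the distance between the endpoint $p^{(k)}$ of the algorithm's previous schedule and the delivery point of $\opt$'s last old request from $\frac{|S^{(k-1)}|+T}{3}$ (the general bound in \cref{rightorder3}) to $\frac{T}{2}$. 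This is precisely what replaces the final estimate $2+\frac{2}{3\alpha}$ by $2+\frac{1}{2\alpha}$, hence the threshold $\alpha^2=\alpha+\frac{2}{3}$ by $\alpha^2=\alpha+\frac{1}{2}$. So while you correctly guessed both quadratics, the mechanism you propose neither produces them nor survives the case in which the previous schedule overruns.
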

	
	This further improves on the previous best known upper bound of 2.618 \cite{waoa}.
	The best known lower bound is 1.9 \cite{Lipmann/03}.
	
	We go on to show that the bound in \cref{half-line_result} is best-possible for $\wait(\alpha)$ over all parameter choices $\alpha\geq0$.
	
	\begin{theorem}\label{half-line_lower-bound}
		For all $\alpha\geq0$, $\lazy(\alpha)$ has a competitive ratio of at least\linebreak $\alpha+1\thickapprox2.366$ for open online dial-a-ride on the half-line for every capacity $c\in\N\cup\{\infty\}$.
	\end{theorem}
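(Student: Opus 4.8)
The plan is to bound the competitive ratio of $\lazy(\alpha)$ from below by the maximum of two quantities and then optimize over $\alpha$: a \emph{waiting} bound equal to $\alpha+1$, which dominates for $\alpha\geq\alpha_0:=\tfrac{1+\sqrt3}{2}$ and is the term appearing in the statement, and a \emph{premature-commitment} bound $h(\alpha)$ that is decreasing in $\alpha$ and dominates for $\alpha\leq\alpha_0$. I expect $h(\alpha)=2+\tfrac{1}{2\alpha}$, so that the two bounds cross exactly at the positive root $\alpha_0=\tfrac{1+\sqrt3}{2}$ of $2\alpha^2-2\alpha-1=0$, where both equal $\alpha_0+1=\tfrac{3+\sqrt3}{2}\approx2.366$. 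Combining the two bounds then yields competitive ratio at least $\tfrac{3+\sqrt3}{2}$ for \emph{every} $\alpha\geq0$, which both implies the stated $\alpha+1$ bound and shows that the value in \cref{half-line_result} is best possible over all parameter choices.

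First I would establish the waiting bound. Consider the single request $(1,1;0)$ on the half-line. By its waiting rule, $\lazy(\alpha)$ computes the trivial schedule of length $1$ and remains at the origin $\o$ until time $\alpha$ before departing, reaching position $1$ only at time $\alpha+1$; since $\opt=1$, this gives ratio $\alpha+1$. As this instance uses a single request, the bound holds for every capacity $c\in\N\cup\{\infty\}$ and, in particular, already proves the statement for all $\alpha\geq\alpha_0$.

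The main work is the premature-commitment bound for $\alpha\leq\alpha_0$. Here I would force $\lazy(\alpha)$ to leave the origin early with a short schedule and then, while it is committed and en route, reveal further requests that an offline solution serves cheaply but that cost $\lazy(\alpha)$ a full additional traversal. On the half-line the adversary has only one spatial direction, so—unlike the general-metric construction of \cite{waoa}—it must create the extra cost using leftward transport requests $(a,b;t)$ with $b<a$ that have to be carried back toward $\o$, compounded over the successive waiting phases $\ti\to\tip$ of $\lazy(\alpha)$. I would release an initial request at distance $1$ (triggering departure at time $\alpha$), and then, at the moment the server passes a suitably chosen position, release a transport request back toward the origin together with a second far request, calibrated so that $\opt$ routes everything in one sweep while $\lazy(\alpha)$ must first complete its committed leg, return, wait again, and only then serve the remainder. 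Tracking the resulting completion time as a function of $\alpha$ is where I expect the bulk of the calculation, and I anticipate the ratio to simplify to $2+\tfrac1{2\alpha}$. Since at most one request is ever loaded at a time, the construction is valid for every capacity.

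Finally I would combine the two regimes. For $\alpha\geq\alpha_0$ the first instance gives ratio at least $\alpha+1\geq\alpha_0+1=\tfrac{3+\sqrt3}{2}$, and for $\alpha<\alpha_0$ the second gives at least $h(\alpha)>h(\alpha_0)=\tfrac{3+\sqrt3}{2}$ by monotonicity, so every $\lazy(\alpha)$ has competitive ratio at least $\tfrac{3+\sqrt3}{2}\approx2.366$. The main obstacle is the premature-commitment instance: I must choose the release positions and times so that the single-direction geometry of the half-line still forces a near-complete extra traversal on $\lazy(\alpha)$, and then verify that the bound it yields meets $\alpha+1$ precisely at $\alpha_0=\tfrac{1+\sqrt3}{2}$, matching the upper bound of \cref{half-line_result} and confirming optimality of the waiting parameter.
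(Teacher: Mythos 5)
Your high-level strategy coincides with the paper's: a waiting bound of $\alpha+1$ (your single-request instance $(1,1;0)$ is exactly the idea behind the cited \cref{lem:1plusalpha}), a second decreasing bound for small $\alpha$, and the observation that the two cross at $\alpha_0=\frac{1+\sqrt{3}}{2}$, where $2+\frac{1}{2\alpha_0}=\alpha_0+1=\frac{3+\sqrt{3}}{2}$. But there are two genuine gaps. First, the premature-commitment bound---which you yourself call ``the main work''---is never actually constructed: you give no positions, no release times, no verification that $\lazy(\alpha)$ does not interrupt (i.e.\ that the return-to-origin test against the deadline $\alpha\cdot\opt(t)$ fails), and no computation of the resulting ratio. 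The paper's instance is delicate on all these points: $r_1=(0,1;0)$, $r_2=(1,0;0)$, $r_3=(1,2-\epsilon;0)$ released at time $0$, so that $\lazy(\alpha)$ waits until $\alpha(4-2\epsilon)$ and serves them in the order $(r_1,r_3,r_2)$, plus a point request $r_4=(4\alpha-2,4\alpha-2;4\alpha)$ timed so that $\opt=4\alpha$, while the no-interruption condition $4\alpha+2-2\alpha\epsilon>4\alpha^2$ holds precisely because $\alpha<\frac{1+\sqrt{3}}{2}$; this yields $2+\frac{1}{2\alpha}$ only for $\alpha\in[1,1.366)$.

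Second, your claim that $h(\alpha)=2+\frac{1}{2\alpha}$ holds on all of $(0,\alpha_0)$ (and can then be invoked by monotonicity) is false, and no construction can make it true: as $\alpha\to 0$ this quantity tends to infinity, whereas $\lazy(\alpha)$ has bounded competitive ratio for small $\alpha$. Concretely, for any $\alpha\le 3$, if $k$ denotes the last schedule, then $t^{(k)}\le\max\{\alpha\cdot\opt,\,t^{(k-1)}+|S^{(k-1)}|\}\le 3\cdot\opt$ (using $t^{(k-1)}\le\opt$ and $|S^{(k-1)}|\le 2\cdot\opt(t^{(k-1)})$ via \cref{defobs}b)), and $|S^{(k)}|\le 2\cdot\opt$, so $\lazy(\alpha)$ is at most $5$-competitive; intuitively, as $\alpha\to0$ it degenerates to an $\textsc{Ignore}$-type algorithm with ratio around $4$. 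The paper's own instance also visibly breaks for small $\alpha$: the point $4\alpha-2$ leaves the half-line once $\alpha<\frac12$, and the intended optimal ordering fails already for $\alpha<1$. This is exactly why the paper splits the small-$\alpha$ regime into two pieces: the cited bound $1+\frac{3}{\alpha+1}>2.5$ for $\alpha\in[0,1)$ (\cref{thm:lower_bound_alpha-leq-1}) and the new construction giving $2+\frac{1}{2\alpha}$ only on $[1,1.366)$ (\cref{thm:lower_bound_alpha-geq-1}). Your two-regime outline therefore has to become a three-regime argument (or you need a genuinely different instance for $\alpha<1$), and the central construction still has to be supplied and verified.
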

	
	In the preemptive version of the online dial-a-ride problem, the server is allowed to unload requests anywhere and pick them up later again.
	In this version, prior to our work, the best known upper bound on general metric spaces was 2.618~\cite{waoa} and the best known upper bound on the line and the half-line was 2.41~\cite{BjeldeDisserHackfeldEtal/20}.
	Obviously, every non-preemptive algorithm can also be applied in the preemptive setting, however, its competitive ratio may degrade since the optimum might have to use preemption.
	Our algorithm~$\wait$ repeatedly executes optimal solutions for subsets of requests and can be turned preemptive by using preemptive solutions.
    With this change, our analysis of~$\wait$ still carries through in the preemptive case and improves the state of the art for general metric spaces and the half-line, but not the line.
	 The best known lower bound in the preemptive version on general metric spaces is 2.04 \cite{BjeldeDisserHackfeldEtal/20} and the best known lower bound on the half-line is 1.62 \cite{Lipmann/03}.

	\begin{corollary}
		The competitive ratio of the open preemptive online dial-a-ride problem with any capacity $c \in \N \cup \{\infty\}$ is upper bounded by 
		\begin{enumerate}[a)]
			\item $\frac{3}{2}+\sqrt{\frac{11}{12}} \thickapprox 2.457$ and this bound is achieved by $\wait\left(\frac{1}{2}+\sqrt{\frac{11}{12}}\right)$,
			\item $1+\frac{1+\sqrt{3}}{2}\thickapprox 2.366$ on the half-line and this bound is achieved by $\wait(\frac{1+\sqrt{3}}{2})$.
		\end{enumerate}

	\end{corollary}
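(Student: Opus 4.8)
The plan is to obtain the two bounds as the preemptive counterparts of \cref{mainthm} and \cref{half-line_result}, by running $\wait(\alpha)$ with the same parameters but letting it compute \emph{preemptive} optimal subschedules in place of non-preemptive ones. First I would fix this modified algorithm: whenever $\wait(\alpha)$ starts a schedule, it serves the currently known unserved requests along a preemptive offline-optimal route from its current position, and I would write $\opt^{\mathrm p}(\sigma)$ for the preemptive offline optimum that we now compare against. Part~a) will follow by replaying the analysis behind \cref{mainthm} with $\alpha=\tfrac12+\sqrt{11/12}$, and part~b) by replaying the analysis behind \cref{half-line_result} with $\alpha=\tfrac{1+\sqrt3}{2}$; both arguments treat every capacity $c\in\N\cup\{\infty\}$ uniformly, exactly as in the non-preemptive case. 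The line is excluded only because the resulting ratio $2.457$ does not improve on the known preemptive bound of $2.41$ there.

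The key observation is that the non-preemptive analysis never exploits any property separating the non-preemptive optimum from the preemptive one. Every lower bound on $\opt(\sigma)$ invoked in the proofs is a \emph{relaxation} bound: dropping a subset of requests and/or ignoring release times can only decrease the optimal cost, and combined with the triangle inequality this yields the lower bounds used. I would check that each such bound holds verbatim for $\opt^{\mathrm p}(\sigma)$, since discarding requests and ignoring release times is a relaxation in the preemptive model as well, and since release times are respected by every feasible solution, preemptive or not. On the algorithm's side, the cost of each subschedule executed by the modified $\wait(\alpha)$ is precisely the preemptive offline optimum for the corresponding subset from the corresponding position, so ``preemptive'' now sits consistently on both sides of the comparison.

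With these correspondences in hand, I would run through the chain of inequalities proving \cref{mainthm} and \cref{half-line_result}, substituting $\opt\mapsto\opt^{\mathrm p}$ and replacing each subschedule cost by its preemptive value. Since every $\opt$-lower-bound carries over to $\opt^{\mathrm p}$ and the preemptive subschedules are never longer than their non-preemptive counterparts, the derived upper bound on the completion time of $\wait(\alpha)$ can only improve; hence $\wait(\alpha)(\sigma)\le(\alpha+1)\,\opt^{\mathrm p}(\sigma)$ for the respective $\alpha$, giving the claimed ratios $2.457$ and $2.366$. The main obstacle is the bookkeeping of this transfer: one must go through each lemma feeding into the two theorems and confirm that no step secretly uses structural features of an optimal non-preemptive tour (for instance, that a loaded request is carried to its destination without intermediate unloading), so that the substitution $\opt\mapsto\opt^{\mathrm p}$ leaves every inequality intact.
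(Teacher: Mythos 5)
Your proposal is correct and follows essentially the same route as the paper: the paper's own justification (given only as a remark in the introduction) is precisely that $\wait(\alpha)$ can be made preemptive by computing preemptive optimal subschedules, and that the analysis of \cref{mainthm} and \cref{half-line_result} carries through verbatim against the preemptive offline optimum, since every bound on $\opt$ used there is a relaxation/triangle-inequality bound that holds equally for the preemptive optimum. Your additional check that no lemma exploits structure specific to non-preemptive optimal tours is exactly the verification the paper leaves implicit.
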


	\subsubsection*{Related work.}
	
Two of the most natural algorithms for the online dial-a-ride problem are $\textsc{Ignore}$ and $\textsc{Replan}$.
The basic idea of $\textsc{Ignore}$ is to repeatedly follow an optimum schedule over the currently unserved requests and ignoring all requests released during its execution. 
The competitive ratio of this algorithm is known to be exactly 4 \cite{Birx/20,Krumke0}. By contast, the main idea of $\textsc{Replan}$ is to start a new schedule over all unserved requests whenever a new request is released. While this algorithm has turned out to be notoriously difficult to analyze, it is known that its competitive ratio is at least 2.5 \cite{Ausiello/01} and at most 4~\cite{Birx/20}. Several variants of these algorithms have been proposed such as \textsc{SmartStart}~\cite{Krumke0}, \textsc{SmarterStart}~\cite{BirxDisser/22} or \textsc{WaitOrIgnore}~\cite{Lipmann/03}, which lead to improvements on the best known bounds on the competitive ratio of the dial-a-ride problem. In this work, we study the recently suggested algorithm $\lazy$ \cite{waoa}, which is known to achieve a competitive ratio of $\varphi+1$ for the open online dial-a-ride problem, where $\varphi=\smash{\frac{\sqrt{5}+1}{2}}\thickapprox1.618$ denotes the golden ratio.

	For the preemptive version of the open dial-a-ride problem, the best known upper bound on general metric spaces is $\varphi+1\thickapprox 2.618$ \cite{waoa}. 
	Bjelde et al.~\cite{BjeldeDisserHackfeldEtal/20} proved a stronger upper bound of $1+\sqrt{2}\approx 2.41$ for when the metric space is the line. 
	
	In terms of lower bounds, Birx et al.~\cite{BirxDisser/22} were able to prove that every algorithm for the open online dial-a-ride problem has a competitive ratio of at least~2.05, even if the metric space is the line.
	 This separates dial-a-ride from online TSP on the line, where it is known that the competitive ratio is exactly~2.04~\cite{BjeldeDisserHackfeldEtal/20}. 
	For open online dial-a-ride on the half-line, Lipmann \cite{Lipmann/03} established a lower bound of~1.9 for the non-preemptive version and a lower bound of~1.62 for the preemptive version.
	
	
	For the closed variant of the online dial-a-ride problem, the competitive ratio is known to be exactly 2 on general metric spaces \cite{AscheuerKrumkeRambau/00,Ausiello/01,FeuersteinStougie/01} and between~1.76 and~2 on the line \cite{Birx/20,BjeldeDisserHackfeldEtal/20}. 
	On the half-line the best known lower bound is 1.71 \cite{AscheuerKrumkeRambau/00} and the best known upper bound is 2 \cite{AscheuerKrumkeRambau/00,FeuersteinStougie/01}.
	The TSP variant of the closed dial-a-ride problem is tightly analyzed with a competitive ratio of 2 on general metric spaces \cite{AscheuerKrumkeRambau/00,Ausiello/01,FeuersteinStougie/01}, of 1.64 on the line \cite{Ausiello/01,BjeldeDisserHackfeldEtal/20}, and of 1.5 on the half-line \cite{BlomKrumkePaepeStougie/01}.
	
Other variants of the problem have been studied in the literature.
This includes settings where the request sequence has to fulfill some reasonable additional properties \cite{BlomKrumkePaepeStougie/01,HauptmeierKrumkeRabau/00,Krumke/02}, where the server is presented with additional~\cite{AusielloAllulliBonifaciLaura/06}  or less~\cite{Lipmann/04} information, where the server has some additional abilities~\cite{BonifaciStougie/08,JailletWagner/08}, where the server has to handle requests in a given order \cite{HauptmeierKrumkeRambauWirth/01,JailletWagner/08}, or where we consider different objectives than the completion time \cite{AusielloDemangeLauraPaschos/04,BienkowskiKL/21,BienkowskiLiu/19,HauptmeierKrumkeRabau/00,JailletLu/11,JailletLu/14,Krumke/02,KrumkePaepePoensgenEtal/06,KrumkePaepePoensgenStougie/03}. Other examples include the study of randomized algorithms \cite{Krumke0}, or other metric spaces, such as a circle \cite{JawgalMuralidharaSrinivasan/19}. Moreover, it has been studied whether some natural classes of algorithms can have good competitive ratios. For example, \emph{zealous} algorithms always have to move towards an unserved request or the origin \cite{BlomKrumkePaepeStougie/01}. A \emph{schedule-based} algorithm operates in schedules that are not allowed to be interrupted. 
Birx \cite{Birx/20} showed that all such algorithms have a competitive ratio of at least 2.5.
Together with our results, this implies that schedule-based algorithms algorithms cannot be best-possible.

	\section{Algorithm description and notation}
	In this section, we define the algorithm $\wait$ introduced in \cite{waoa}. The rough idea of the algorithm is to wait until several requests are revealed and then start a schedule serving them. Whenever a new request arrives, we check whether we can deliver all currently loaded requests and return to the origin in a reasonable time. If this is possible, we do so and begin a new schedule including the new requests starting from the origin. If this is not possible, we keep following the current schedule and consider the new request later.
	
	More formally, a \emph{schedule} is a sequence of actions specifying the server's  behaviour, including its movement and where requests are loaded or unloaded. By $\opt[t]$, we denote an optimal schedule beginning in $\o$ at time 0 and serving all requests that are released not later than time $t$. By $\opt(t)$, we denote its completion time.
	Given a set of requests $R$ and some point $x \in M$, we denote by $S(R,x)$ a shortest schedule serving all requests in $R$ beginning from point $x$ at some time after all requests in $R$ are released. In other words, we can ignore the release times of the requests when computing $S(R,x)$. As waiting is not beneficial for the server if there are no release times, the \emph{length of the schedule}, i.e., the distance the server travels, is the same as the time needed to complete it and we denote this by $|S(R,x)|$.
	
	Now that we have established the notation needed, we can describe the algorithm (cf. \hyperref[lazy]{Algorithm 1}). 
	By $t$, we denote the current time. By $p_t$, we denote the position of the server at time $t$, and by $R_t$, we denote the set of requests that have been released but not served until time $t$. The variable~$i$ is a counter over the schedules started by the algorithm. The waiting parameter $\alpha\geq 1$ specifies how long we wait before starting a schedule. The algorithm uses the following commands: $\deliver$ orders the server to finish serving all currently loaded requests and return to $\o$ in the fastest possible way, $\waituntil(t)$ orders the server to remain at its current location until time $t$, and $\schedule(S)$ orders the server to execute the actions defined by schedule $S$.
	When any of these commands is invoked, the server aborts what it is doing and executes the new command. Whenever the server has completed a command, we say that it becomes \emph{idle}.
	
	\begin{algorithm}\label{lazy}
		\caption{$\wait(\alpha)$}
		initialize: $i \gets 0$
		
		\vspace{.1cm}
		\hrule
		\vspace{.1cm}
		
		\emph{upon receiving a request:}\\
		\If{server can serve all loaded requests and return to $\o$ until time $\alpha \cdot \opt (t)$}{
			\textbf{execute} $\deliver$ 
		}
		
		\hrule
		\vspace{.1cm}
		
		\emph{upon becoming idle:}\\
		\uIf{$t<\alpha\cdot\opt(t)$}{
			\textbf{execute} $\waituntil(\alpha \cdot \opt (t))$} 
		{\ElseIf{$R_t\neq\emptyset$}{
				$i\gets i+1$, $R^{(i)}\gets R_t$, $t^{(i)}\gets t$, $p^{(i)}\gets p_t$ \\
				$S^{(i)}\gets S(R^{(i)},p^{(i)})$\\
				\textbf{execute} $\schedule ({S^{(i)})}$
		}} 		
	\end{algorithm}
	We make a few comments for illustration of the algorithm.
	If the server returns to the origin upon receiving a request, we say that the schedule it was currently following is \emph{interrupted}.
	Observe that, due to interruption, the sets~$R^{(i)}$ are not necessarily disjoint.
	Also, observe that $p^{(1)}=\o$, and if schedule $S^{(i)}$ was interrupted, we have $p^{(i+1)}=\o$ and $t^{(i+1)}=\alpha\cdot\opt(t^{(i+1)})$.
	If $S^{(i)}$ was not interrupted, $p^{(i+1)}$ is the ending position of $S^{(i)}$.
	
	The following observations were already noted in \cite{waoa} and follow directly from the definitions above and the fact that requests in $R^{(i)} \setminus R^{(i-1)}$ were released after time $t^{(i-1)}$.
	
	\begin{restatable}[\cite{waoa}]{observation}{DefObs}\label{defobs}
		For every request sequence, the following hold.
		\begin{enumerate}[a)] 
			\item For every $i>1$, $\opt(\ti)\geq t^{(i-1)}\geq\alpha\cdot\opt(t^{(i-1)})$.
			\item For every $x,y \in M$ and every subset of requests $R$, we have
			$|S(R,x)|\leq d(x,y)+|S(R,y)|$.
			\item Let $i>1$ and assume that $S^{(i-1)}$ was not interrupted. Let $a$ be the starting position of the request in $R^{(i)}$ that is picked up first by $\opt(t^{(i)})$. Then, 
			\begin{equation*}
				\opt(t^{(i)})\geq t^{(i-1)}+|S(R^{(i)},a)|\geq \alpha\cdot\opt(t^{(i-1)})+|S(R^{(i)},a)|.
			\end{equation*}
		\end{enumerate}
	\end{restatable}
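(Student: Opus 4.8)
The plan is to handle the three parts separately, since each follows from a short argument combining the algorithm's stopping conditions with the release-time property quoted just before the statement.

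Part b) I would prove by exhibiting an explicit schedule. To serve $R$ starting from $x$, first move straight from $x$ to $y$, at cost $d(x,y)$, and then execute a shortest schedule serving $R$ from $y$, at cost $|S(R,y)|$. Since release times are ignored in the definition of $S(\cdot,\cdot)$, this concatenation is a valid schedule serving $R$ from $x$, so its length is an upper bound on the shortest such schedule, giving $|S(R,x)|\le d(x,y)+|S(R,y)|$ at once.

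Part a) splits into its two inequalities. The second, $t^{(i-1)}\ge\alpha\cdot\opt(t^{(i-1)})$, is read directly off the algorithm: a schedule $S^{(i-1)}$ is only ever started in the branch of the idle handler that is reached exactly when the waiting condition $t<\alpha\cdot\opt(t)$ fails, and there $t^{(i-1)}=t$. For the first inequality $\opt(t^{(i)})\ge t^{(i-1)}$, I would locate a request released strictly after $t^{(i-1)}$ but no later than $t^{(i)}$. Since $i>1$, a new schedule is started at $t^{(i)}$, so $R^{(i)}\neq\emptyset$, and every request in $R^{(i)}\setminus R^{(i-1)}$ is released after $t^{(i-1)}$ by the quoted fact. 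Here I would split into the two cases: if $S^{(i-1)}$ was interrupted, the interrupting request is such a request and still lies in $R^{(i)}$; if it was not interrupted, then $S^{(i-1)}$ serves all of $R^{(i-1)}$, so every still-unserved request in $R^{(i)}$ must have been released after $t^{(i-1)}$. As the optimum cannot serve a request before its release time, $\opt(t^{(i)})$ is at least this release time, hence at least $t^{(i-1)}$.

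Part c) I would prove by decomposing an optimal schedule for the requests released up to $t^{(i)}$. Let $\tau$ be the time at which this optimum first loads a request of $R^{(i)}$; by definition this happens at position $a$. Because $S^{(i-1)}$ is not interrupted, the argument above shows every request of $R^{(i)}$ is released after $t^{(i-1)}$, so $\tau\ge t^{(i-1)}$. Moreover, since $a$ hosts the first request of $R^{(i)}$ served by the optimum, none of $R^{(i)}$ is served before $\tau$, so the remainder of the optimal schedule, starting at $a$ at time $\tau$, is itself a schedule serving $R^{(i)}$ and therefore has length at least $|S(R^{(i)},a)|$. Combining yields $\opt(t^{(i)})\ge\tau+|S(R^{(i)},a)|\ge t^{(i-1)}+|S(R^{(i)},a)|$, and the final inequality follows by substituting $t^{(i-1)}\ge\alpha\cdot\opt(t^{(i-1)})$ from part a). The steps are all short; the only place needing care is the bookkeeping around interruption, since $R^{(i)}$ may contain old requests, so in parts a) and c) I would verify precisely that the relevant member of $R^{(i)}$ is genuinely released after $t^{(i-1)}$ and that non-interruption forces this for \emph{every} member, so that the optimum's first pickup of $R^{(i)}$ cannot occur before $t^{(i-1)}$.
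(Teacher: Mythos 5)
Your proof is correct and takes essentially the same route as the paper, which offers no separate argument but states that the observation ``follows directly from the definitions and the fact that requests in $R^{(i)}\setminus R^{(i-1)}$ were released after time $t^{(i-1)}$'' --- exactly what your three arguments elaborate: the concatenated schedule for b), the algorithm's stopping rule $t^{(i-1)}\geq\alpha\cdot\opt(t^{(i-1)})$ plus a released-after-$t^{(i-1)}$ request for a), and splitting the optimum at its first pickup in $R^{(i)}$ for c). Your bookkeeping around interruption (the interrupting request is released during $S^{(i-1)}$ and survives $\deliver$ unserved, while non-interruption forces all of $R^{(i)}$ to be released after $t^{(i-1)}$) is the only delicate point, and you handle it correctly.
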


	\section{Factor-revealing approach}\label{sec:factorRevealing}
The results in this paper were informed by a factor-revealing technique, inspired by a similar approach of Bienkowski et al.~\cite{BienkowskiKL/21}, to analyze a specific algorithm~\alg, in our case $\lazy$. 
The technique is based on a formulation of the adversary problem, i.e., the problem of finding an instance that maximizes the competitive ratio, as an optimization problem of the form
	\begin{align}
	\max\biggl\{\frac{\alg(x)}{\opt(x)}\Bigm| x\textrm{ describes a dial-a-ride instance}\biggr\}.\label{eq:opt_problem}
	\end{align}
	An optimum solution to this problem immediately yields the competitive ratio of~\alg.
	Of course, we cannot hope to solve this optimization problem or even describe it with a finite number of variables.	
	The factor-revealing approach consists in relaxing~\cref{eq:opt_problem} to a practically solvable problem over a finite number of variables.
	
	The key is to select a set of variables that captures the structure of the problem well enough to allow for meaningful bounds.
	In our case, we can, for example, introduce variables for the starting position and duration of the second-to-last as well as for the last schedule.
	We then need to relate those variables via constraints that ensure that an optimum solution to the relaxed problem actually has a realization as a dial-a-ride instance.
	For example, we might add the constraint that the distance between the starting positions of the last two schedules is upper bounded by the duration of the second-to-last schedule.
	
	The power of the factor-revealing approach is that it allows to follow an iterative process for deriving structurally crucial inequalities:
	When solving the relaxed optimization problem, we generally have to expect an optimum solution that is not realizable and overestimates the competitive ratio.
	We can then focus our efforts on understanding why the corresponding variable assignment cannot be realized by a dial-a-ride instance. Then, we can introduce additional variables and constraints to exclude such solutions.
	In this way, the unrealizable solutions inform our analysis in the sense that we obtain bounds on the competitive ratio that can be proven analytically by only using the current set of variables and inequalities.  
	Once we obtain a realizable lower bound, we thus have found the exact competitive ratio of the algorithm under investigation.
	
	In order to practically solve the relaxed optimization problems, we limit ourselves to linear programs (LPs).
	Note that the objective of~\cref{eq:opt_problem} is linear if we normalize to $\opt(x)=1$.
	We can do this, since the competitive ratio is invariant with respect to rescaling the metric space and release times of requests.
	Another advantage of using linear programs is that we immediately obtain a formal proof of the optimum solution from an optimum solution to the LP dual.
	Of course, the correctness of the involved inequalities still needs to be established.
	
	In the remainder of this paper, we present a purely analytic proof of our results.
	Many of the inequalities we derive in lemmas were informed by a factor-revealing approach via a linear program with a small number of binary variables.
	This means that we additionally need to branch on all binary variables in order to obtain a formal proof via LP duality.
	We refer to \Cref{sec:fr_hl} for more details of the binary program that informed our results for the half-line.

	\section{Analysis on general metric spaces}

	This section is concerned with the proof of \cref{mainthm}. For the remainder of this section, let $(r_1,\dots,r_n)$ be some fixed request sequence. Let $k$ be the number of schedules started by $\wait(\alpha)$, and let $S^{(i)}$, $t^{(i)}$, $p^{(i)}$, $R^{(i)}$ ($1 \leq i \leq k$) be defined as in the algorithm. Note that we slightly abuse notation here because $k$, $S^{(i)}$, $t^{(i)}$, $p^{(i)}$, and $R^{(i)}$ depend on $\alpha$. 
	As it will always be clear from the context what~$\alpha$ is, we allow this implicit dependency in the notation. 
	
	\newcommand{\rflaz}{r_{f,\wait}^{(i)}}
	\newcommand{\aflaz}{a_{f,\wait}^{(i)}}
	\newcommand{\bflaz}{b_{f,\wait}^{(i)}}
	\newcommand{\tflaz}{t_{f,\wait}^{(i)}}
	\newcommand{\rllaz}{r_{l,\wait}^{(i)}}
	\newcommand{\allaz}{a_{l,\wait}^{(i)}}
	\newcommand{\bllaz}{b_{l,\wait}^{(i)}}
	\newcommand{\tllaz}{t_{l,\wait}^{(i)}}
	\newcommand{\rfopt}{r_{f,\opt}^{(i)}}
	\newcommand{\afopt}{a_{f,\opt}^{(i)}}
	\newcommand{\bfopt}{b_{f,\opt}^{(i)}}
	\newcommand{\tfopt}{t_{f,\opt}^{(i)}}
	\newcommand{\rlopt}{r_{l,\opt}^{(i)}}
	\newcommand{\alopt}{a_{l,\opt}^{(i)}}
	\newcommand{\blopt}{b_{l,\opt}^{(i)}}
	\newcommand{\tlopt}{t_{l,\opt}^{(i)}}
	\newcommand{\aafopt}{a_{f,\opt}^{(i+1)}}
	
	As it will be crucial for the proof in which order $\opt$ and $\wait$ serve requests, we introduce the following notation. Let
	\begin{itemize}
		\item $\rfopt=(\afopt,\bfopt;\tfopt)$ be the first request in $R^{(i)}$ picked up by $\opt[t^{(i)}]$,
		\item $\rlopt=(\alopt,\blopt;\tlopt)$ be the last request in $R^{(i)}$ delivered by\linebreak $\opt[\tip]$,
		\item $\rflaz=(\aflaz, \bflaz; \tflaz)$ be the first request in $R^{(i)}$ picked up by $\wait (\alpha)$,
		\item $\rllaz=(\allaz, p^{(i+1)};\tllaz)$ be the last request in $R^{(i)}$ delivered by $\wait (\alpha)$.
	\end{itemize}

	\begin{definition}
		We say that the $i$-th schedule is \emph{$\alpha$-good} if 
		\begin{enumerate}[a)]
			\item $|S^{(i)}| \leq \opt(t^{(i)})$ and
			\item $t^{(i)} + |S^{(i)}| \leq (1+\alpha) \cdot\opt(t^{(i)})$.
		\end{enumerate}
	\end{definition}
	
	In this section, we prove by induction on $i$ that, for $\alpha \geq \frac{1}{2}+\sqrt{11/12}$, every schedule is $\alpha$-good. Note that this immediately implies \cref{mainthm}. 
	
	As our work builds on \cite{waoa}, the first few steps of our proof are the same as in~\cite{waoa}. For better understandability and reading flow, we repeat the proofs of some important but simple steps and mark the results with appropriate citations. 
	The results starting with \cref{wrongorder} are new and improve on the analysis in~\cite{waoa}.
	
	We begin with proving the base case.
	
	\begin{observation}[Base case, \cite{waoa}]\label{basecase}
		For every $\alpha \geq 1$, the first schedule is $\alpha$-good.
	\end{observation}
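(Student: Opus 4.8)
We must show that the very first schedule $S^{(1)}$ started by $\wait(\alpha)$ satisfies both $\alpha$-goodness conditions, i.e.\ $|S^{(1)}|\leq \opt(t^{(1)})$ and $t^{(1)}+|S^{(1)}|\leq(1+\alpha)\cdot\opt(t^{(1)})$. The key structural fact to exploit is that the first schedule always starts from the origin: by the remark following the algorithm, $p^{(1)}=\o$. This means $S^{(1)}=S(R^{(1)},\o)$ is a shortest schedule serving all requests in $R^{(1)}$ starting from the origin while ignoring release times.

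The plan is as follows. First I would prove condition a). Since $R^{(1)}\subseteq R_{t^{(1)}}$ consists precisely of requests released by time $t^{(1)}$, the optimal offline schedule $\opt[t^{(1)}]$ must also serve all of these requests starting from $\o$ at time $0$. Because $S(R^{(1)},\o)$ is by definition a shortest schedule serving $R^{(1)}$ from $\o$ (with release times ignored), its length can only be shorter than the distance traveled by $\opt[t^{(1)}]$, which in turn is at most $\opt(t^{(1)})$ since the optimal server also moves at unit speed. Hence $|S^{(1)}|=|S(R^{(1)},\o)|\leq\opt(t^{(1)})$, giving condition a).

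For condition b), the crucial observation is \emph{when} the first schedule is started. By the ``upon becoming idle'' branch of the algorithm, a schedule is only launched once $t\geq\alpha\cdot\opt(t)$; before that, the server waits. Therefore at the moment $t^{(1)}$ the first schedule begins, we have $t^{(1)}=\alpha\cdot\opt(t^{(1)})$ (the wait command pushes the start time up to exactly this value, assuming $\alpha\geq1$ so that waiting indeed occurs). Combining this with condition a) yields
\begin{equation*}
    t^{(1)}+|S^{(1)}|\leq \alpha\cdot\opt(t^{(1)})+\opt(t^{(1)})=(1+\alpha)\cdot\opt(t^{(1)}),
\end{equation*}
which is exactly condition b).

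I expect the only subtle point to be justifying the identity $t^{(1)}=\alpha\cdot\opt(t^{(1)})$ rather than merely the inequality $t^{(1)}\geq\alpha\cdot\opt(t^{(1)})$. One has to argue that the server genuinely waits until this threshold: for $\alpha\geq1$ we have $\alpha\cdot\opt(t)\geq\opt(t)\geq t$ for small $t$, so the $\waituntil(\alpha\cdot\opt(t))$ branch is triggered and the server idles until the condition $t<\alpha\cdot\opt(t)$ first fails. A minor technicality is that $\opt(t)$ may jump as new requests arrive during the waiting period, so one should note that the server restarts its wait each time it becomes idle with $t<\alpha\cdot\opt(t)$, and the first schedule is launched precisely at the earliest idle moment where $t\geq\alpha\cdot\opt(t)$, at which point equality holds by continuity of the left-hand side between request arrivals. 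Everything else is immediate from the definitions, which is why this is stated as an observation rather than a full lemma.
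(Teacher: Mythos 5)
Your proof is correct and follows the same route as the paper's: condition a) from comparing the shortest schedule $S(R^{(1)},\o)$ against $\opt[t^{(1)}]$, which also starts at $\o$ and serves $R^{(1)}$, and condition b) from the identity $t^{(1)}=\alpha\cdot\opt(t^{(1)})$ combined with a). The extra justification you give for why equality (not just $\geq$) holds at the start of the first schedule is a fact the paper simply takes for granted from the algorithm's waiting rule, and your handling of it is sound.
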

	\begin{proof}
		Recall that $S^{(1)}$ begins in $\o$ and is the shortest tour serving all requests in $R^{(1)}$.
		$\opt [t^{(1)}]$ begins in $\o$ and serves all requests in $R^{(1)}$, too, which yields $|S^{(1)}|\leq \opt(t^{(1)})$.
		The fact that we have $t^{(1)}=\alpha \cdot \opt (t^{(1)})$ implies\linebreak
		$t^{(1)}+|S^{(1)}| \leq (1+\alpha) \cdot \opt (t^{(1)})$.
		\qed
	\end{proof}
	
	Next, we observe briefly that the induction step is not too difficult when the last schedule was interrupted.
	
	\begin{observation}[Interruption case, \cite{waoa}]\label{interrupted}
		Let $\alpha \geq 1$. Assume that schedule~$S^{(i)}$ was interrupted. Then, $S^{(i+1)}$ is $\alpha$-good.
	\end{observation}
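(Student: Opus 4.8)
The plan is to verify the two conditions of $\alpha$-goodness for $S^{(i+1)}$ directly, exploiting the structural consequences of interruption recorded just after the algorithm description: when $S^{(i)}$ is interrupted, the server returns to the origin, so $p^{(i+1)}=\o$, and the next schedule starts only once $t^{(i+1)}=\alpha\cdot\opt(t^{(i+1)})$. These two identities are exactly what reduce the interruption case to the base case.

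First I would establish condition a), $|S^{(i+1)}|\leq\opt(t^{(i+1)})$. Since $p^{(i+1)}=\o$, we have $S^{(i+1)}=S(R^{(i+1)},\o)$, a shortest tour starting at the origin that serves $R^{(i+1)}$ while ignoring release times. The offline schedule $\opt[t^{(i+1)}]$ also starts at the origin and serves every request released by time $t^{(i+1)}$; by definition of $R_{t^{(i+1)}}=R^{(i+1)}$, this set consists of requests released no later than $t^{(i+1)}$, so $\opt[t^{(i+1)}]$ serves all of $R^{(i+1)}$. As $S(R^{(i+1)},\o)$ is a shortest tour over a subset of these requests and need not respect release times, its length cannot exceed $\opt(t^{(i+1)})$. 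This is precisely the comparison used for the first schedule in \cref{basecase}.

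Then I would deduce condition b) in a single line. Substituting the interruption identity $t^{(i+1)}=\alpha\cdot\opt(t^{(i+1)})$ together with condition a) gives
\[
t^{(i+1)}+|S^{(i+1)}|\leq\alpha\cdot\opt(t^{(i+1)})+\opt(t^{(i+1)})=(1+\alpha)\cdot\opt(t^{(i+1)}),
\]
which is condition b). Hence $S^{(i+1)}$ is $\alpha$-good.

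I do not expect a genuine obstacle here: the whole argument rests on the two structural facts $p^{(i+1)}=\o$ and $t^{(i+1)}=\alpha\cdot\opt(t^{(i+1)})$, and once these are invoked both conditions collapse to the elementary bound already proven for the first schedule. The only point deserving a sentence of care is checking that $R^{(i+1)}$ contains only requests released by time $t^{(i+1)}$, so that $\opt[t^{(i+1)}]$ truly serves all of them; this is immediate from the definition of $R_t$ as the released-but-unserved requests at the current time.
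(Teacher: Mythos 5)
Your proof is correct and takes exactly the same route as the paper: both rest on the two interruption identities $p^{(i+1)}=\o$ and $t^{(i+1)}=\alpha\cdot\opt(t^{(i+1)})$, from which condition a) follows by comparing $S(R^{(i+1)},\o)$ to $\opt[t^{(i+1)}]$ and condition b) by direct substitution. Your write-up merely spells out the details (such as why $\opt[t^{(i+1)}]$ serves all of $R^{(i+1)}$) that the paper's two-line proof leaves implicit.
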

	\begin{proof}
		If schedule $S^{(i)}$ was interrupted, we have $p^{(i+1)}=\o$ and\linebreak $t^{(i+1)}=\alpha \cdot \opt (t^{(i+1)})$.
		Therefore, $|S^{(i+1)}|=|S(R^{(i+1)},\o)|\leq \opt (t^{(i+1)})$ and $t^{(i+1)}+|S^{(i+1)}| \leq (1+\alpha) \cdot \opt (t^{(i+1)})$.
		\qed
	\end{proof}
	
	For this reason, we will assume in many of the following statements that the schedule $S^{(i)}$ was not interrupted.
	
	By careful observation of the proof in \cite{waoa}, one can see that the following fact already holds for smaller $\alpha$. For convenience, we repeat the proof of the following Lemma with an adapted value of~$\alpha$.
	
	\begin{lemma}[\cite{waoa}]\label{waoa}
		Let  ${\alpha\geq \frac{1+\sqrt{17}}{4}\thickapprox 1.281}$ and $i\in \{1, \dots, k-1\}$. If $S^{(i)}$ is $\alpha$-good, then
		$|S^{(i+1)}| \leq \opt (t^{(i+1)})$.
	\end{lemma}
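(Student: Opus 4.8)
The plan is to reduce the statement to a single estimate on the starting position $p^{(i+1)}$ of the next schedule and then to control that estimate through the non-interruption of $S^{(i)}$. First I would dispose of the case that $S^{(i)}$ was interrupted: there $p^{(i+1)}=\o$ and \cref{interrupted} shows that $S^{(i+1)}$ is even $\alpha$-good, so in particular $|S^{(i+1)}|\le\opt(t^{(i+1)})$. Hence I may assume $S^{(i)}$ was not interrupted, so that $p^{(i+1)}$ is the ending position of $S^{(i)}$ and \cref{defobs}~c) applies with index $i+1$. Writing $a:=\aafopt$ for the first pickup of $\opt[t^{(i+1)}]$ among the requests in $R^{(i+1)}$, this gives $|S(R^{(i+1)},a)|\le\opt(t^{(i+1)})-t^{(i)}$, while \cref{defobs}~b) yields $|S^{(i+1)}|=|S(R^{(i+1)},p^{(i+1)})|\le d(p^{(i+1)},a)+|S(R^{(i+1)},a)|$. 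Combining the two, it remains to bound $d(p^{(i+1)},a)$ well enough that the two contributions add up to at most $\opt(t^{(i+1)})$.

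To bound $d(p^{(i+1)},a)$ I would exploit that $p^{(i+1)}$ is the destination of the last request $\wait(\alpha)$ delivers in $S^{(i)}$, which lies in $R^{(i)}$; hence $\opt[t^{(i+1)}]$ also visits $p^{(i+1)}$, and moreover $d(\o,p^{(i+1)})\le\opt(t^{(i)})$. I would then compare, along the trajectory of $\opt[t^{(i+1)}]$, the time it visits $p^{(i+1)}$ with the time $\tau_a\ge t^{(i)}$ at which it first picks up at $a$. If $\opt$ reaches $p^{(i+1)}$ no later than $a$, then $d(p^{(i+1)},a)$ is at most the time elapsed between the two events, which is at most $\opt(t^{(i+1)})-|S(R^{(i+1)},a)|$; plugging this in already gives $|S^{(i+1)}|\le\opt(t^{(i+1)})$ for every $\alpha\ge1$.

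The hard case is when $\opt$ reaches $a$ strictly before $p^{(i+1)}$. Here $d(p^{(i+1)},a)\le\opt(t^{(i+1)})-\tau_a\le\opt(t^{(i+1)})-t^{(i)}$, so together with the bound on $|S(R^{(i+1)},a)|$ the task collapses to establishing $\opt(t^{(i+1)})\le 2t^{(i)}$. This is where I expect the main difficulty, and where the waiting parameter enters. The idea is that, since $S^{(i)}$ was not interrupted, at the release of each request of $R^{(i+1)}$ the server was unable to deliver its load and return to $\o$ by time $\alpha\cdot\opt$; estimating the cost of $\deliver$ by completing $S^{(i)}$ and then walking to $\o$, and using $\alpha$-goodness in the form $t^{(i)}+|S^{(i)}|\le(1+\alpha)\opt(t^{(i)})$ together with $d(\o,p^{(i+1)})\le\opt(t^{(i)})$, bounds the optimum at that moment by $\tfrac{2+\alpha}{\alpha}\opt(t^{(i)})$. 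Combined with $t^{(i)}\ge\alpha\opt(t^{(i)})$ this yields $\opt(t^{(i+1)})\le 2t^{(i)}$ precisely when $2\alpha^2-\alpha-2\ge0$, i.e.\ when $\alpha\ge\frac{1+\sqrt{17}}{4}$.

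The step I expect to be most delicate is making the last estimate uniform: a request of $R^{(i+1)}$ may be released while $S^{(i)}$ is still running or during the idle/waiting phase that precedes $S^{(i+1)}$, and both the $\deliver$-cost bound and the value of $\opt$ at the release time must be argued in each situation (the waiting phase being the more subtle one, since there the non-interruption condition is only barely violated). Carrying the estimate through in every subcase, rather than the clean algebra $2\alpha^2-\alpha-2\ge0$ itself, is the real work; this is also the point at which tracking the threshold carefully—rather than fixing $\alpha=\varphi$ as in \cite{waoa}—lets one push the requirement down to $\frac{1+\sqrt{17}}{4}$.
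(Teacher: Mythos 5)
Your proposal is correct and follows essentially the same route as the paper's own proof: after the same initial reductions (the interruption case, then the easy case where $\opt[t^{(i+1)}]$ reaches $p^{(i+1)}$ before the first pickup point $a^{(i+1)}_{f,\opt}$), your hard case combines exactly the same ingredients --- Observation~\ref{defobs}~b) and~c), the non-interruption bound $t^{(i)}+|S^{(i)}|+d(p^{(i+1)},\o)>\alpha\cdot\opt(t^{(i+1)})$, $\alpha$-goodness, and $d(p^{(i+1)},\o)\leq\opt(t^{(i)})$ --- and arrives at the identical quadratic condition $2\alpha^2-\alpha-2\geq 0$, i.e.\ $\alpha\geq\frac{1+\sqrt{17}}{4}$. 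The waiting-phase subtlety you flag at the end (a request of $R^{(i+1)}$ released after $S^{(i)}$ has already finished, when ``complete the schedule and walk back'' no longer bounds the cost of $\deliver$) is genuine, but the paper does not treat it either --- its derivation of the displayed non-interruption inequality tacitly assumes the last request arrives while $S^{(i)}$ is still running --- so your sketch matches the paper's argument in both substance and level of rigor.
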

	\begin{proof}	
		First, observe that if $S^{(i)}$ was interrupted, we have $p^{(i+1)}=\o$. Note that $\opt(\tip)$ begins in $\o$ and serves all requests in $R^{(i+1)}$ so that we have
		\begin{equation*}
		|S^{(i+1)}|=|S(R^{(i+1)}, \o)|\leq\opt(\tip).
		\end{equation*}
		Therefore, assume from now on that $S^{(i)}$ was not interrupted. Also, if $\opt[\tip]$ serves $\rllaz$ at $p^{(i+1)}$ before collecting any request from $R^{(i+1)}$, we trivially have
		\begin{equation*}
		|S^{(i+1)}|=|S(R^{(i+1)},p^{(i+1)}|\leq \opt(\tip).
		\end{equation*}
		Therefore, assume additionally that $\opt[\tip]$ collects $r_{f,\opt}^{(i+1)}$ before serving $\rllaz$. Next, we prove the following assertion.\\
		\emph{Claim: In the setting described above, we have}\begin{equation}\label{eq:claim}
		d(\aafopt,p^{(i+1)}) \leq \left(1+\frac{2}{\alpha}-\alpha\right)\opt(t^{(i)}).
		\end{equation}
		To prove the claim, note that $r_{f,\opt}^{(i+1)}$ is released not earlier than $\alpha\cdot\opt(t^{(i)})$. Since we assume that $\opt(\tip)$ collects $r_{f,\opt}^{(i+1)}$ before serving $\rllaz$ at $p^{(i+1)}$, we obtain
		\begin{equation}\label{eq:wrongorderopt}
		\opt(\tip)\geq\alpha\cdot\opt(\ti)+d(\aafopt,p^{(i+1)}).
		\end{equation}
		Upon the arrival of the last request in $R^{(i)}$, we have $\opt(t)=\opt(\tip)$ and the server can finish its current schedule and return to the origin in time $t^{(i)}+|S^{(i)}|+d(p^{(i+1)},\o)$. As we assume that $S^{(i)}$ was not interrupted, this yields
		\begin{equation}\label{eq:noreturn}
		t^{(i)}+|S^{(i)}|+d(p^{(i+1)},\o)>\alpha\cdot\opt(\tip).
		\end{equation}
		Combined, we obtain that
		\begin{align*}
		d(\aafopt,p^{(i+1)})
		\overset{\text{\cref{eq:wrongorderopt}}}&{\leq} \opt(\tip)-\alpha\cdot\opt(\ti)\\
		\overset{\cref{eq:noreturn}}&{\leq}
		\frac{1}{\alpha}\cdot \left( \ti + |S^{(i)}| + d(p^{(i+1)},\o) \right) - \alpha\cdot\opt(\ti)\\
		\overset{\text{$S^{(i)} \alpha$-good}}&{\leq} \frac{1}{\alpha}\cdot \left( (1+\alpha)\cdot \opt(\ti) + d(p^{(i+1)},\o) \right) - \alpha\cdot\opt(\ti)\\
		&\leq \left( 1+\frac{2}{\alpha} - \alpha \right) \opt(\ti),
		\end{align*}
		where we have used in the last inequality that $d(p^{(i+1)}, \o) \leq \opt(\ti)$ because $\opt(\ti)$ begins in $\o$ and has to serve $\rllaz$ at $p^{(i+1)}$. This completes the proof of the claim.

		Now, we turn back to proving \cref{waoa}. We obtain
		\begin{align*}
		|S^{(i+1)}|&\leq d(p^{(i+1)},\aafopt)+|S(R^{(i+1)},\aafopt)|\\
		\overset{\text{Obs \ref{defobs}c)}}&{\leq} d(p^{(i+1)},\aafopt)+\opt(\tip)-\alpha\cdot\opt(\ti)\\
		\overset{\eqref{eq:claim}}&{\leq} \left(1+\frac{2}{\alpha}-2\alpha\right)\opt(t^{(i)})+\opt(\tip)\\
		&\leq \opt(\tip),
		\end{align*}
		where the last inequality follows from the fact that $1+\frac{2}{\alpha}-2\alpha\leq 0$ if and only if $\alpha\geq \frac{1+\sqrt{17}}{4}\thickapprox 1.2808$.
		\qed
	\end{proof}
	
	Recall that the goal of this section is to prove that every schedule is $\alpha$-good. So far, we have proven the base case (cf. \cref{basecase}) and\linebreak $|S^{(i+1)}|\leq \opt(t^{(i+1)})$ (\cref{waoa}) in the induction step. 
	It remains to show that $t^{(i+1)} + |S^{(i+1)}| \leq (1+\alpha) \cdot \opt (t^{(i+1)})$ assuming $S^{(1)}, \dots, S^{(i)}$ are $\alpha$-good. In \cref{interrupted}, we have already seen that this holds if $S^{(i)}$ was interrupted. To show that the induction step also holds if $S^{(i)}$ was not interrupted, we distinguish several cases for the order in which $\opt$ serves the requests. We begin with the case that $\opt[t^{(i+1)}]$ picks up some request in $R^{(i+1)}$ before serving $\rllaz$, i.e., that $\opt[t^{(i+1)}]$ does not follow the order of the $S^{(i)}$.

	%
	
	\begin{lemma}\label{wrongorder}
		Let $\alpha \geq 1$. Assume that $S^{(i)}$ is $\alpha$-good and was not interrupted, and that $\opt[t^{(i+1)}]$ picks up $r_{f,\opt}^{(i+1)}$ before serving $r_{l,\wait}^{(i)}$. Then, $t^{(i+1)}+|S^{(i+1)}|\leq (1+\alpha) \cdot \opt (t^{(i+1)})$.
	\end{lemma}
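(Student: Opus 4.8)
The plan is to bound $t^{(i+1)}+|S^{(i+1)}|$ by controlling the length $|S^{(i+1)}|$ and the start time $t^{(i+1)}$ separately, using the wrong-order hypothesis only to bound the initial detour that $\wait(\alpha)$ must make at the start of its $(i{+}1)$-th schedule. First I would bound $|S^{(i+1)}|$. Since $S^{(i)}$ was not interrupted, $p^{(i+1)}$ is the endpoint of $S^{(i)}$, so I route $S(R^{(i+1)},p^{(i+1)})$ through the point $\aafopt$ and apply \cref{defobs}b) and c):
\[
|S^{(i+1)}|\leq d(p^{(i+1)},\aafopt)+|S(R^{(i+1)},\aafopt)|\leq d(p^{(i+1)},\aafopt)+\opt(\tip)-\ti,
\]
where the second inequality is exactly \cref{defobs}c) applied to index $i+1$ (valid because $S^{(i)}$ was not interrupted).

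Next I would bound the detour $d(p^{(i+1)},\aafopt)$ via the wrong-order hypothesis. Because $S^{(i)}$ was not interrupted, $r_{f,\opt}^{(i+1)}\in R^{(i+1)}$ is released after $\ti\geq\alpha\cdot\opt(\ti)$; since $\opt[\tip]$ reaches $\aafopt$ before it delivers $\rllaz$ at $p^{(i+1)}$, it still has to travel at least $d(\aafopt,p^{(i+1)})$ after time $\alpha\cdot\opt(\ti)$. This yields $\opt(\tip)\geq\alpha\cdot\opt(\ti)+d(\aafopt,p^{(i+1)})$, i.e. $d(p^{(i+1)},\aafopt)\leq\opt(\tip)-\alpha\cdot\opt(\ti)$. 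This is precisely \cref{eq:wrongorderopt} from the proof of \cref{waoa}, so I would simply reuse it.

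The core of the argument is then a case distinction on $t^{(i+1)}$. Since $S^{(i)}$ was not interrupted, the server finishes $S^{(i)}$ at time $\ti+|S^{(i)}|$ and starts $S^{(i+1)}$ as soon as it is allowed to, so $t^{(i+1)}=\max\{\ti+|S^{(i)}|,\,\alpha\cdot\opt(\tip)\}$. If the maximum is the waiting term, $t^{(i+1)}=\alpha\cdot\opt(\tip)$, I combine it with the part-a) bound $|S^{(i+1)}|\leq\opt(\tip)$ from \cref{waoa} to get $t^{(i+1)}+|S^{(i+1)}|\leq(1+\alpha)\opt(\tip)$ immediately (the wrong-order hypothesis is not needed here). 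If instead $t^{(i+1)}=\ti+|S^{(i)}|$, I substitute the two bounds above together with the $\alpha$-goodness bound $|S^{(i)}|\leq\opt(\ti)$:
\[
t^{(i+1)}+|S^{(i+1)}|\leq |S^{(i)}|+d(p^{(i+1)},\aafopt)+\opt(\tip)\leq (1-\alpha)\opt(\ti)+2\opt(\tip),
\]
and conclude via $(1-\alpha)\opt(\ti)\leq(\alpha-1)\opt(\tip)$, which holds for $\alpha\geq1$ since the left side is nonpositive and the right side nonnegative (using $\opt(\tip)\geq\opt(\ti)$ from \cref{defobs}a)).

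The main obstacle I anticipate is the waiting case: there the detour bound is useless and I must fall back on $|S^{(i+1)}|\leq\opt(\tip)$. This means the argument secretly relies on part-a) of $\alpha$-goodness for $S^{(i+1)}$ being already in force, i.e. on \cref{waoa}, which needs $\alpha\geq\tfrac{1+\sqrt{17}}{4}$ rather than the advertised $\alpha\geq1$; I would double-check that this stronger hypothesis is indeed available in the induction (it is, since the target $\alpha=\tfrac12+\sqrt{11/12}$ exceeds $\tfrac{1+\sqrt{17}}{4}$, and part-a) is established before part-b) in the induction step). A secondary point requiring care is justifying the clean identity $t^{(i+1)}=\max\{\ti+|S^{(i)}|,\alpha\cdot\opt(\tip)\}$, which hinges on the facts that $\wait(\alpha)$ never starts a schedule before time $\alpha\cdot\opt$ and that, in the no-wait branch, all requests of $R^{(i+1)}$ have already been released by $\ti+|S^{(i)}|$.
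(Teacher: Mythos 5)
Your proof is correct and follows essentially the same approach as the paper's: the same case split on $t^{(i+1)}=\max\{\alpha\cdot\opt(t^{(i+1)}),t^{(i)}+|S^{(i)}|\}$, the same invocation of \cref{waoa} in the waiting case, and in the no-wait case the same combination of \cref{defobs}b), \cref{defobs}c) and the wrong-order release-time bound on $\opt(t^{(i+1)})$, arriving at the same final estimate $2\cdot\opt(t^{(i+1)})\leq(1+\alpha)\cdot\opt(t^{(i+1)})$. The only differences are cosmetic bookkeeping---you use part~a) of $\alpha$-goodness together with $d(p^{(i+1)},a_{f,\opt}^{(i+1)})\leq\opt(t^{(i+1)})-\alpha\cdot\opt(t^{(i)})$ where the paper uses part~b) together with $d(p^{(i+1)},a_{f,\opt}^{(i+1)})\leq\opt(t^{(i+1)})$---and your caveat that the waiting case secretly needs $\alpha\geq\frac{1+\sqrt{17}}{4}$ through \cref{waoa} applies equally to the paper's own proof and is harmless for the final choice of $\alpha$.
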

	\begin{proof}
		Using the order in which $\opt$ handles the requests, we obtain the following.
		After picking up $r_{f,\opt}^{(i+1)}$ at $\aafopt$ after time $t^{(i)}$, $\opt[\tip]$ has to serve $\rllaz$ at $p^{(i+1)}$ so that 
		\begin{equation}\label{eq:abefore}
		\opt(\tip)\geq t^{(i)}+d(p^{(i+1)},\aafopt).
		\end{equation} 
		After finishing schedule~$S^{(i)}$, the server either waits until time $\alpha\cdot\opt(t^{(i+1)})$ or immediately starts the next schedule, i.e., we have
		\begin{equation*}\label{eq:ti1}
		t^{(i+1)} = \max\{\alpha\cdot\opt(t^{(i+1)}),t^{(i)}+|S^{(i)}|\}.
		\end{equation*}
		If $t^{(i+1)}=\alpha\cdot\opt(t^{(i+1)})$, the assertion follows immediately from \cref{waoa}. Thus, assume $t^{(i+1)}=t^{(i)}+|S^{(i)}|$. This yields
		\begin{align*}
		t^{(i+1)}+|S^{(i+1)}|
		\overset{\text{Obs \ref{defobs}b)}}&{\leq} t^{(i)}+|S^{(i)}|+d(p^{(i+1)},\aafopt)+|S(R^{(i+1)},\aafopt)| \\
		\overset{\text{$S^{(i)} \alpha$-good}}&{\leq} 
		(1+\alpha) \cdot \opt (t^{(i)}) +d(p^{(i+1)},\aafopt)\\
		&\hspace*{1cm}+|S(R^{(i+1)},\aafopt)|\\
		\overset{\text{Obs \ref{defobs}a)}}&{\leq}
		\frac{1+\alpha}{\alpha}t^{(i)} +d(p^{(i+1)},\aafopt)+|S(R^{(i+1)},\aafopt)|\\
		\overset{\text{Obs \ref{defobs}c)}}&{\leq}
		\frac{1}{\alpha}t^{(i)} +d(p^{(i+1)},\aafopt)+\opt(\tip)\\
		\overset{\text{\cref{eq:abefore}}}&{\leq}
		\frac{1}{\alpha}\left(\opt(\tip)-d(p^{(i+1)},\aafopt)\right)+d(p^{(i+1)},\aafopt)\\
		&\hspace{1cm}+\opt(\tip)\\
		&=\left(1+\frac{1}{\alpha} \right) \opt(\tip) + \left(1-\frac{1}{\alpha} \right)d(p^{(i+1)},\aafopt)\\
		&\leq 2 \cdot\opt(\tip),
		\end{align*}	
		where we have used in the last inequality that $	d(p^{(i+1)},\aafopt)\leq \opt(\tip)$ as $\opt[\tip]$ has to visit both points.
		\qed
	\end{proof}
	
	Next, we consider the case where $\opt$ handles $\rllaz$ and $r_{f,\opt}^{(i+1)}$ in the same order as $\wait$.
	
	
	\begin{lemma}\label{rightorder1}
		Let $\alpha\geq 1$. Assume that schedules $S^{(1)}, \dots, S^{(i)}$ are $\alpha$-good, $S^{(i)}$ was not interrupted, and $\opt[t^{(i+1)}]$ serves $\rllaz$ before collecting $r_{f,\opt}^{(i+1)}$. If we have $d(p^{(i+1)},\aafopt)+\opt(t^{(i)})\leq\alpha\cdot\opt(t^{(i+1)})$, then $t^{(i+1)}+|S^{(i+1)}|\leq (1+\alpha) \cdot \opt (t^{(i+1)})$.
	\end{lemma}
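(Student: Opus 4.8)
The plan is to follow the same case structure as the proof of \cref{wrongorder}, splitting on the value of $\tip$. Since $S^{(i)}$ was not interrupted, after completing it the server either waits until $\alpha\cdot\opt(\tip)$ or starts the next schedule immediately, so $\tip=\max\{\alpha\cdot\opt(\tip),\,\ti+|S^{(i)}|\}$. If $\tip=\alpha\cdot\opt(\tip)$, then \cref{waoa} gives $|S^{(i+1)}|\leq\opt(\tip)$ and hence $\tip+|S^{(i+1)}|\leq(1+\alpha)\cdot\opt(\tip)$ at once; I would dispose of this subcase in one line. All the work lies in the complementary subcase $\tip=\ti+|S^{(i)}|$.

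In that subcase the key step is to bound $|S^{(i+1)}|$ by routing the algorithm's schedule through $\aafopt$, the starting position of the first request of $R^{(i+1)}$ picked up by $\opt[\tip]$. By \cref{defobs}b), $|S^{(i+1)}|=|S(R^{(i+1)},p^{(i+1)})|\leq d(p^{(i+1)},\aafopt)+|S(R^{(i+1)},\aafopt)|$, and since $S^{(i)}$ was not interrupted, \cref{defobs}c) gives $|S(R^{(i+1)},\aafopt)|\leq\opt(\tip)-\ti$. Substituting $\tip=\ti+|S^{(i)}|$ and cancelling $\ti$ yields
\[
\tip+|S^{(i+1)}|\leq|S^{(i)}|+d(p^{(i+1)},\aafopt)+\opt(\tip).
\]
I would then use part a) of the $\alpha$-goodness of $S^{(i)}$, namely $|S^{(i)}|\leq\opt(\ti)$, and finally apply the hypothesis $d(p^{(i+1)},\aafopt)+\opt(\ti)\leq\alpha\cdot\opt(\tip)$ to collapse the right-hand side to $(1+\alpha)\cdot\opt(\tip)$, which is the assertion.

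The computation is short, so the main point is conceptual: it is to recognise why this lemma must carry an extra hypothesis where \cref{wrongorder} did not. In the present ordering, where $\opt[\tip]$ delivers $\rllaz$ at $p^{(i+1)}$ before collecting $r_{f,\opt}^{(i+1)}$, one cannot derive a free lower bound on $\opt(\tip)$ in terms of $d(p^{(i+1)},\aafopt)$ analogous to \eqref{eq:abefore}. The supplied inequality $d(p^{(i+1)},\aafopt)+\opt(\ti)\leq\alpha\cdot\opt(\tip)$ is precisely the substitute that makes the estimate close, and the thing to get right is that this is the exact quantity one needs to control. Note that the ordering assumption itself is not invoked in the estimate; it only explains why the distance hypothesis has to be assumed, and the complementary regime, where that inequality fails, will have to be treated in a separate lemma.
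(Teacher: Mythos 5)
Your proof is correct and takes essentially the same approach as the paper's: reduce to the subcase $\tip=\ti+|S^{(i)}|$ (the waiting subcase following from \cref{waoa}), route $S^{(i+1)}$ through $\aafopt$ via Observation~\ref{defobs}b) and~c), and close with the distance hypothesis. The only difference is bookkeeping: you pair the first inequality of Observation~\ref{defobs}c) with part~a) of $\alpha$-goodness, whereas the paper pairs the second inequality with part~b); both collapse to the identical bound $\opt(\tip)+d(p^{(i+1)},\aafopt)+\opt(\ti)$, and your closing remark that the ordering assumption is never used in the estimate itself is also accurate for the paper's proof.
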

	\begin{proof}
		Similarly as in the proof of \cref{wrongorder}, we can assume
		\begin{equation}\label{eq:noWaiting}
		t^{(i+1)}=t^{(i)}+|S^{(i)}|.
		\end{equation}
		We have
		\begin{align*}
		t^{(i+1)}+|S^{(i+1)}| 
		\overset{\text{Obs \ref{defobs}b)}}&{\leq} t^{(i)}+|S^{(i)}|+d(p^{(i+1)},\aafopt)+|S(R^{(i+1)},\aafopt)| \\
		\overset{\text{$S^{(i)} \alpha$-good}}&{\leq} 
		(1+\alpha) \cdot \opt (t^{(i)}) +d(p^{(i+1)},\aafopt)\\
		&\hspace*{1cm}+|S(R^{(i+1)},\aafopt)|\\
		\overset{\text{Obs \ref{defobs}c)}}&{\leq} (1+\alpha)\cdot\opt(t^{(i)})+d(p^{(i+1)},\aafopt)\\
		& \hspace{1cm}+\opt(t^{(i+1)}) - \alpha \cdot \opt (t^{(i)}) \\
		&= \opt(t^{(i+1)})  + d(p^{(i+1)},\aafopt)+\opt(t^{(i)})\\
		&\leq (1+\alpha) \cdot \opt(\tip ),
		\end{align*}
		where the last inequality follows from the assumption that\linebreak $d(p^{(i+1)},\aafopt)+\opt(t^{(i)})\leq\alpha\cdot\opt(t^{(i+1)})$.
		\qed
	\end{proof}
	
	Now that we have proven the case described in \cref{rightorder1}, we will assume in the following that 
	\begin{equation}\label{eq:dlarge}
		d(p^{(i+1)},\aafopt)>\alpha\cdot\opt(t^{(i+1)})-\opt(t^{(i)}).
	\end{equation}
	The following lemma states that, in this case, the $( i-1)$-th schedule (if it exists) was interrupted, i.e., the $i$-th schedule starts in the origin at time $\alpha\cdot\opt(t^{(i)})$. 

	\begin{lemma}\label{rightorder2}
		Let $\alpha\geq \frac{1+\sqrt{3}}{2}\thickapprox 1.366$. Assume that the $i$-th schedule is $\alpha$-good and was not interrupted, and $\opt[t^{(i+1)}]$ serves $\rllaz$ before collecting $r_{f,\opt}^{(i+1)}$. If~\cref{eq:dlarge} holds, then $p^{(i)}=\o$ and $t^{(i)}=\alpha\cdot\opt(t^{(i)})$.
	\end{lemma}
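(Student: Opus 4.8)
The plan is to argue by contraposition. For $i=1$ we have $p^{(1)}=\o$ and $t^{(1)}=\alpha\cdot\opt(t^{(1)})$ directly from the algorithm, so the claim holds; hence assume $i>1$. Since an interrupted $S^{(i-1)}$ forces $p^{(i)}=\o$ and $t^{(i)}=\alpha\cdot\opt(\ti)$, it suffices to show that, under the hypotheses of the lemma, if $S^{(i-1)}$ was \emph{not} interrupted then \cref{eq:dlarge} cannot hold. So I would assume additionally that $S^{(i-1)}$ was not interrupted and derive $d(p^{(i+1)},\aafopt)\le\alpha\cdot\opt(\tip)-\opt(\ti)$, contradicting \cref{eq:dlarge}; consequently $S^{(i-1)}$ must have been interrupted, which yields the assertion.

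The inequalities I would collect are as follows. Since $S^{(i)}$ was not interrupted, the no-return bound \cref{eq:noreturn}, namely $\ti+|S^{(i)}|+d(p^{(i+1)},\o)>\alpha\cdot\opt(\tip)$, is available; since $S^{(i-1)}$ was not interrupted the analogous bound $t^{(i-1)}+|S^{(i-1)}|+d(p^{(i)},\o)>\alpha\cdot\opt(\ti)$ holds as well. By induction both $S^{(i-1)}$ and $S^{(i)}$ are $\alpha$-good, \cref{defobs}a) gives the chain $\opt(\tip)\ge\alpha\cdot\opt(\ti)\ge\alpha^2\cdot\opt(t^{(i-1)})$, and $d(\o,p^{(i+1)})\le\opt(\ti)$ because $\opt[\ti]$ serves $\rllaz$ at $p^{(i+1)}$. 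Finally, the right-order assumption yields the lower bound $\opt(\tip)\ge d(\o,p^{(i+1)})+d(p^{(i+1)},\aafopt)+|S(R^{(i+1)},\aafopt)|$, since $\opt[\tip]$ starts in $\o$, reaches $p^{(i+1)}$ (where it serves $\rllaz$) before collecting $r_{f,\opt}^{(i+1)}$ at $\aafopt$, and only afterwards serves all of $R^{(i+1)}$.

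The core computation combines these as follows. Feeding \cref{eq:dlarge} into the right-order lower bound yields an upper bound on $d(\o,p^{(i+1)})$, while combining the no-return bound for $S^{(i)}$ with the $\alpha$-goodness of $S^{(i)}$ yields a matching lower bound; squeezing the two and invoking $\opt(\tip)\ge\alpha\cdot\opt(\ti)$ reduces the desired contradiction to a quadratic inequality in $\alpha$. I expect the main obstacle to be that this first, direct squeeze only rules out $\alpha\ge\varphi$ (it reproduces the inequality $\alpha^2-\alpha-1\le 0$ underlying the original analysis), which is too weak; the improvement down to $\alpha\ge\frac{1+\sqrt3}{2}$ is exactly where the non-interruption of $S^{(i-1)}$ must enter, and this is the delicate, LP-informed heart of the argument.

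To extract the missing slack I would use that, for a non-interrupted $S^{(i-1)}$, the start of $S^{(i)}$ satisfies $t^{(i)}=\max\{\alpha\cdot\opt(\ti),\,t^{(i-1)}+|S^{(i-1)}|\}$ while $p^{(i)}$ is the endpoint of $S^{(i-1)}$, so that $d(p^{(i)},p^{(i+1)})\le|S^{(i)}|$. I would split on whether the server waits before starting $S^{(i)}$: in the waiting case $t^{(i)}=\alpha\cdot\opt(\ti)$ the no-return bound for $S^{(i-1)}$ becomes nonvacuous and pins $p^{(i)}$, hence $p^{(i+1)}$, away from the origin, strengthening the lower bound on $d(\o,p^{(i+1)})$; in the no-wait case $t^{(i)}=t^{(i-1)}+|S^{(i-1)}|$ the $\alpha$-goodness of $S^{(i-1)}$ together with \cref{defobs}a) both forces $\alpha\le\varphi$ and sharpens the lower bound on $\opt(\tip)$ relative to $\opt(\ti)$. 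Carrying either refinement through the squeeze should turn the quadratic into $2\alpha^2-2\alpha-1\le 0$, i.e.\ $\alpha\le\frac{1+\sqrt3}{2}$, contradicting $\alpha\ge\frac{1+\sqrt3}{2}$. I anticipate that reconciling the two subcases into a single bound of precisely the right strength—so that exactly one extra additive $\opt(\ti)$ of slack is gained over the direct argument—will be the calculation-heavy step, and the one most in need of the binary-case reasoning informed by the factor-revealing program.
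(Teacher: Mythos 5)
Your overall reduction (for $i\geq 2$, it suffices to show that $S^{(i-1)}$ was interrupted) is sound, and your first squeeze is correct as far as it goes: it indeed reproduces only $\alpha^2-\alpha-1<0$, i.e.\ it rules out nothing below $\varphi$. But the step that carries the lemma from $\varphi$ down to $\frac{1+\sqrt3}{2}$ is exactly where your plan fails, and the refinements you sketch do not close it. The paper's extra slack does not come from the non-interruption of $S^{(i-1)}$ at all; it comes from a dichotomy on the request $r_{l,\wait}^{(i)}$ itself. Either $r_{l,\wait}^{(i)}\in R^{(i-1)}\cap R^{(i)}$, which already implies $S^{(i-1)}$ was interrupted, or $r_{l,\wait}^{(i)}$ was released after $t^{(i-1)}$, so $t_{l,\wait}^{(i)}\geq t^{(i-1)}\geq\alpha\cdot\opt(t^{(i-1)})$. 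In the latter case the right-order assumption gives the \emph{timing} bound $\opt(t^{(i+1)})\geq t_{l,\wait}^{(i)}+d(p^{(i+1)},a_{f,\opt}^{(i+1)})$, which is strictly stronger than your purely geometric bound $\opt(t^{(i+1)})\geq d(\o,p^{(i+1)})+d(p^{(i+1)},a_{f,\opt}^{(i+1)})+|S(R^{(i+1)},a_{f,\opt}^{(i+1)})|$, because $d(\o,p^{(i+1)})$ may be tiny while $t_{l,\wait}^{(i)}$ cannot be. Combined with \cref{eq:dlarge} and \cref{defobs}a), this yields $\opt(t^{(i-1)})<\bigl(1+\frac1\alpha-\alpha\bigr)\opt(t^{(i)})$, i.e.\ the $(i-1)$-st optimum is very short. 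That smallness is what makes the interruption condition at the release of $r_{l,\wait}^{(i)}$ satisfiable: the server finishes $S^{(i-1)}$ by $(1+\alpha)\opt(t^{(i-1)})$ and returns within a further $\opt(t^{(i-1)})$, and both $(\alpha+2)\opt(t^{(i-1)})$ and $\opt(t^{(i)})+\opt(t^{(i-1)})$ are then at most $\alpha\cdot\opt(t^{(i)})$, the latter precisely when $2+\frac1\alpha-\alpha\leq\alpha$, i.e.\ $\alpha\geq\frac{1+\sqrt3}{2}$. This is where the quadratic $2\alpha^2-2\alpha-1\geq0$ actually enters; your plan guesses the right quadratic but has no derivation of it.

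Without this release-time mechanism, your inequality set provably cannot produce a contradiction in the range $\frac{1+\sqrt3}{2}\leq\alpha<\varphi$. Concretely, normalize $\opt(t^{(i+1)})=1$ and take $\alpha=1.5$; then the assignment $\opt(t^{(i)})=\frac23$, $\opt(t^{(i-1)})=\frac49$, $t^{(i-1)}=\frac23$, $|S^{(i-1)}|=\frac13$, $d(\o,p^{(i)})=\frac49$, $t^{(i)}=1$, $|S^{(i)}|=\frac25$, $d(\o,p^{(i+1)})=\frac16-\epsilon$, $d(p^{(i+1)},a_{f,\opt}^{(i+1)})=\frac56+\frac\epsilon2$, $|S(R^{(i+1)},a_{f,\opt}^{(i+1)})|=0$ satisfies every constraint you list: both no-return bounds, $\alpha$-goodness of both schedules, \cref{defobs}a), $d(\o,p^{(i+1)})\leq\opt(t^{(i)})$, your right-order bound, \cref{eq:dlarge}, $t^{(i)}=\max\{\alpha\opt(t^{(i)}),\,t^{(i-1)}+|S^{(i-1)}|\}$, and $d(p^{(i)},p^{(i+1)})\leq|S^{(i)}|$. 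In particular, your ``pinning'' idea fails because $S^{(i)}$ may travel up to $|S^{(i)}|$ back toward the origin, so a large $d(\o,p^{(i)})$ does not force $d(\o,p^{(i+1)})$ to be large; and in the no-wait case the deduction $\alpha\leq\varphi$ is no contradiction in the very range at issue. Note that the assignment above has $\opt(t^{(i-1)})=\frac49$, violating the paper's inequality $\opt(t^{(i-1)})<\bigl(1+\frac1\alpha-\alpha\bigr)\opt(t^{(i)})=\frac19$ — confirming that the release-time argument for $r_{l,\wait}^{(i)}$ is the one indispensable ingredient your proposal lacks.
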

	\begin{proof}
		If $i=1$, we obviously have $p^{(i)}=\o$ and $t^{(i)}=\alpha\cdot\opt(t^{(i)})$.
		Thus, assume that $i\geq2$.
		If \mbox{$\rllaz\in (R^{(i-1)}\cap R^{(i)})$}, schedule $S^{(i-1)}$ was interrupted and, thus,  the statement holds.
		Otherwise, request $\rllaz$ is released while schedule $S^{(i-1)}$ is running, i.e.,  $\tllaz\geq t^{(i-1)}\geq \alpha \cdot \opt(t^{(i-1)})$. Combining this with the assumption that $\opt[t^{(i+1)}]$ serves $\rllaz$ before collecting $r_{f,\opt}^{(i+1)}$, we obtain
		\begin{equation}\label{eq:t_x-upperBound}
		\opt(t^{(i+1)})\geq \tllaz+d(p^{(i+1)},\aafopt) \geq \alpha \cdot \opt(t^{(i-1)})+d(p^{(i+1)},\aafopt).
		\end{equation}
		Rearranging yields
		\begin{align*}
		\alpha\cdot\opt(t^{(i-1)}) \overset{\eqref{eq:t_x-upperBound}}&{\leq} \opt(t^{(i+1)})-d(p^{(i+1)},\aafopt) \\
		\overset{\eqref{eq:dlarge}}&{<} \opt(t^{(i)})-(\alpha-1)\opt(t^{(i+1)}) \\
		\overset{\text{Obs \ref{defobs}a)}}&{\leq} (1+\alpha-\alpha^2)\opt(t^{(i)}),
		\end{align*}
		which is equivalent to
		\begin{equation}\label{eq:estimate_t_i-1}
		\opt(t^{(i-1)}) < \bigl(1+\frac{1}{\alpha}-\alpha\bigr)\opt(t^{(i)}).
		\end{equation}
		By the assumption that $S^{(i-1)}$ is $\alpha$-good, the server finishes schedule $S^{(i-1)}$ not later than time $(\alpha+1)\cdot\opt(t^{(i-1)})$.
		Thus, at the time where request~$\rllaz$ is released, the server can serve all loaded requests and return to the origin by time
		\begin{equation*}
		\max\{(\alpha+1)\opt(t^{(i-1)}),\tllaz\}+\opt(t^{(i-1)}).
		\end{equation*}
		We have
		\begin{align*}
		(\alpha+2)\cdot\opt(t^{(i-1)})\overset{\eqref{eq:estimate_t_i-1}}&{<}(\alpha+2)\left(1+\frac{1}{\alpha}-\alpha\right)\opt(t^{(i)})\\
		& = \left(3+\frac{2}{\alpha}-\alpha^2-\alpha\right)\opt(t^{(i)})\\
		& \leq \alpha\cdot\opt(t^{(i)}),
		\end{align*}
		where the last inequality holds for $\alpha\geq 1.343$.
		Furthermore, as $\rllaz\in R^{(i)}$, it holds that
		\begin{align*}
		\tllaz+\opt(t^{(i-1)}) &\leq \opt(t^{(i)})+\opt(t^{(i-1)})\\
		& \overset{\eqref{eq:estimate_t_i-1}}{\leq} \left(2+\frac{1}{\alpha}-\alpha\right)\opt(t^{(i)})\leq\alpha\cdot\opt(t^{(i)}),
		\end{align*}
		where the last inequality holds for $\alpha \geq \frac{1+\sqrt{3}}{2}\thickapprox 1.366$.
		This implies that the server can return to the origin by time $\alpha\cdot\opt(t^{(i)})$, i.e., we have $p^{(i)}=\o$.
		\qed
	\end{proof}

   We now come to the technically most involved case.
	
	\begin{restatable}{lemma}{RightOrderThree}\label{rightorder3}
		Let $\alpha\geq \frac{1}{2}+\sqrt{11/12} \thickapprox 1.457$. Assume that the $i$-th schedule is $\alpha$-good and was not interrupted, and $\opt[t^{(i+1)}]$ serves $\rllaz$ before collecting $r_{f,\opt}^{(i+1)}$. If \eqref{eq:dlarge} holds, then $t^{(i+1)}+|S^{(i+1)}|\leq (1+\alpha) \cdot \opt (t^{(i+1)})$.
	\end{restatable}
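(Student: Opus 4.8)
The plan is to follow the same opening moves as in \cref{wrongorder,rightorder1,rightorder2} and then reduce everything to a single, more delicate lower bound on $\opt(\tip)$. As before, after finishing $S^{(i)}$ the server either idles until $\alpha\cdot\opt(\tip)$ or starts the next schedule immediately, so $t^{(i+1)}=\max\{\alpha\cdot\opt(\tip),\,\ti+|S^{(i)}|\}$; if the maximum is $\alpha\cdot\opt(\tip)$ the claim follows directly from \cref{waoa}, so I may assume $t^{(i+1)}=\ti+|S^{(i)}|$. Since \eqref{eq:dlarge} holds, \cref{rightorder2} applies and yields $\ti=\alpha\cdot\opt(\ti)$. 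Two routine ingredients enter. First, routing $S^{(i+1)}$ through $\aafopt$ and applying item~b) of \cref{defobs} gives $|S^{(i+1)}|\le d(p^{(i+1)},\aafopt)+|S(R^{(i+1)},\aafopt)|$. Second, because $\opt[\tip]$ must reach $p^{(i+1)}$ to deliver $\rllaz$ and only \emph{afterwards} travel to $\aafopt$ to collect $r_{f,\opt}^{(i+1)}$, I obtain the cheap bound $\opt(\tip)\ge d(\o,p^{(i+1)})+d(p^{(i+1)},\aafopt)$; combined with \eqref{eq:dlarge} this already controls the distance of $p^{(i+1)}$ from the origin, namely $d(\o,p^{(i+1)})<\opt(\ti)-(\alpha-1)\opt(\tip)$.

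The heart of the argument, and the step I expect to be the main obstacle, is the following strengthened lower bound:
\begin{equation*}
\opt(\tip)\ \ge\ |S^{(i)}|+d(p^{(i+1)},\aafopt)-2\,d(\o,p^{(i+1)})+|S(R^{(i+1)},\aafopt)|.
\end{equation*}
The cheap bound above is too weak precisely when $|S^{(i)}|$ is large compared with $d(\o,p^{(i+1)})$: in that regime $\opt[\tip]$ cannot merely \emph{touch} $p^{(i+1)}$, it must perform the full work of $R^{(i)}$, and visiting $p^{(i+1)}$ to serve $\rllaz$ before heading out to $\aafopt$ can cost it a detour of length up to $2\,d(\o,p^{(i+1)})$ on top of an $R^{(i)}$-tour. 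I plan to prove this by extracting from an optimal route for $\opt[\tip]$ a schedule that serves $R^{(i)}$ from $\o$ and ends at $p^{(i+1)}$: shortcutting past the service of $R^{(i+1)}$ and past the excursion to $\aafopt$ saves at least $d(p^{(i+1)},\aafopt)+|S(R^{(i+1)},\aafopt)|$ minus the $2\,d(\o,p^{(i+1)})$ needed to reconnect, while the extracted schedule still has length at least $|S^{(i)}|$. The delicate part is that $\opt[\tip]$ may interleave $R^{(i)}$ and $R^{(i+1)}$ and need not deliver $\rllaz$ last, so making the shortcutting bookkeeping rigorous is where I expect to split into the two cases according to whether $\opt[\tip]$ finishes all of $R^{(i)}$ before or after collecting $r_{f,\opt}^{(i+1)}$ — exactly the binary branch of the factor-revealing program.

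Granting this bound, the remainder is a linear combination. Writing the claim as $\Phi:=t^{(i+1)}+|S^{(i+1)}|-(1+\alpha)\opt(\tip)\le0$ and abbreviating $\delta:=d(\o,p^{(i+1)})$, I would chain the estimates as
\begin{align*}
\Phi
&= \alpha\cdot\opt(\ti)+|S^{(i)}|+|S^{(i+1)}|-(1+\alpha)\opt(\tip)\\
&\le \alpha\cdot\opt(\ti)+|S^{(i)}|+d(p^{(i+1)},\aafopt)+|S(R^{(i+1)},\aafopt)|-(1+\alpha)\opt(\tip)\\
&\le \alpha\cdot\opt(\ti)+\opt(\tip)+2\delta-(1+\alpha)\opt(\tip)\\
&= \alpha\cdot\opt(\ti)+2\delta-\alpha\cdot\opt(\tip)\\
&< (\alpha+2)\,\opt(\ti)-(3\alpha-2)\,\opt(\tip)\\
&\le -\,(3\alpha^2-3\alpha-2)\,\opt(\ti)\ \le\ 0,
\end{align*}
where the second line is item~b) of \cref{defobs}, the third line is the strengthened bound rearranged to $|S^{(i)}|+d(p^{(i+1)},\aafopt)+|S(R^{(i+1)},\aafopt)|\le\opt(\tip)+2\delta$, the fifth line inserts $2\delta<2\,\opt(\ti)-2(\alpha-1)\opt(\tip)$ from the first paragraph, and the last line uses $\opt(\tip)\ge\ti=\alpha\cdot\opt(\ti)$ from item~a) of \cref{defobs} together with $3\alpha-2>0$. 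Since $3\alpha^2-3\alpha-2\ge0$ is equivalent to $\alpha\ge\frac12+\sqrt{11/12}$, this delivers $\Phi\le0$, which is the assertion; note that the threshold value of $\alpha$ is forced exactly here, as the root of $3\alpha^2-3\alpha-2$.
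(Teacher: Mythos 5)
Your scaffolding is sound: the reduction to $t^{(i+1)}=\ti+|S^{(i)}|$, the appeal to \cref{rightorder2}, the bound $d(\o,p^{(i+1)})<\opt(\ti)-(\alpha-1)\opt(\tip)$ obtained from \eqref{eq:dlarge}, and the closing chain of inequalities (whose threshold $3\alpha^2-3\alpha-2\geq0$ is indeed equivalent to $\alpha\geq\frac12+\sqrt{11/12}$) are all correct. The gap is that the inequality you yourself call the heart of the argument,
\begin{equation*}
\opt(\tip)\ \geq\ |S^{(i)}|+d\bigl(p^{(i+1)},a_{f,\opt}^{(i+1)}\bigr)-2\,d\bigl(\o,p^{(i+1)}\bigr)+\bigl|S\bigl(R^{(i+1)},a_{f,\opt}^{(i+1)}\bigr)\bigr|,
\end{equation*}
is never proven, and the sketch defers exactly the two steps that carry all the difficulty. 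First, the ``binary branch'' you postpone --- whether $\opt[\tip]$ finishes all of $R^{(i)}$ before collecting $r_{f,\opt}^{(i+1)}$ --- is not bookkeeping: in the paper this is a standalone claim established by a contradiction argument that combines \eqref{eq:dlarge}, \cref{waoa} and Observation~\ref{defobs}, and it is precisely where the side condition $\alpha\geq\frac{1}{10}(3+\sqrt{129})\approx1.436$ enters; in the opposite branch your extraction has no prefix serving $R^{(i)}$ to extract, so the inequality would have to be argued in a completely different way (the paper avoids this by showing that branch cannot occur).

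Second, even granting that claim, the shortcutting does not deliver your bound. Writing $T$ for the time $\opt[\tip]$ needs to finish $R^{(i)}$ and $b_{l,\opt}^{(i)}$ for the delivery point of its last $R^{(i)}$-request, what comes for free is $\opt(\tip)\geq T+d(b_{l,\opt}^{(i)},a_{f,\opt}^{(i+1)})+|S(R^{(i+1)},a_{f,\opt}^{(i+1)})|$ together with $T\geq|S^{(i)}|$; your inequality would then need the reconnection cost to satisfy $d\bigl(p^{(i+1)},b_{l,\opt}^{(i)}\bigr)\leq 2\,d\bigl(\o,p^{(i+1)}\bigr)$, which is unjustified and false in general: for $R^{(i)}=\{(\o,H),(H',\o)\}$ with $H<H'$ on one ray, the algorithm's schedule ends at $p^{(i+1)}=\o$, yet $\opt$ may deliver $(\o,H)$ last at distance $H$ from the origin (at no extra cost when $a_{f,\opt}^{(i+1)}$ lies beyond $H$), so the detour is unbounded while $d(\o,p^{(i+1)})=0$. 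In such configurations your inequality survives only thanks to the slack $T-|S^{(i)}|$, so any correct proof must trade this slack against the detour; that trade-off is exactly what the paper's inequality $|S^{(i)}|+T\geq 3\,d\bigl(p^{(i+1)},b_{l,\opt}^{(i)}\bigr)$ in \eqref{eq:third} provides (using that $S^{(i)}$ visits $b_{l,\opt}^{(i)}$ before $p^{(i+1)}$ while $\opt[\tip]$ visits them in the opposite order), leading to the different final estimate $2\opt(\tip)+\frac43|S^{(i)}|-\frac23T$ with the same threshold for $\alpha$. Note also that your bound does not follow from the paper's inequalities either (it would require $T\geq 2|S^{(i)}|-3\,d(\o,p^{(i+1)})$), so as written it is an open conjecture and the lemma remains unproven.
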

	
	\begin{proof}
		We begin by proving the following assertion.\\
		\emph{Claim: $\opt[t^{(i+1)}]$ serves all requests in $R^{(i)}$ before picking up~$r_{f,\opt}^{(i+1)}$ in~$\aafopt$.}\\	
		To prove the claim, assume otherwise, i.e., that $\opt[t^{(i+1)}]$ serves $\rlopt$  after collecting $r_{f,\opt}^{(i+1)}$.
		The request~$r_{f,\opt}^{(i+1)}$ is released after schedule~$S^{(i)}$ is started, i.e., after time $\alpha\cdot\opt(t^{(i)})$. Thus,
		\begin{align}\label{eq:case221_opt_estimate1}
			\opt(t^{(i+1)}) &\geq \alpha\cdot\opt(t^{(i)})+d(\aafopt,\blopt) \nonumber \\
			\overset{\triangle\text{-ineq}}&{\geq}\!\!\!\!\! \alpha\cdot\opt(t^{(i)})\!+\!d(\aafopt,p^{(i+1)})\!-\!d(\blopt,\o)\!-\!d(\o,p^{(i+1)}).
		\end{align}
		Since $S^{(i)}$ starts in~$\o$, ends in $p^{(i+1)}$ and serves $\rlopt$, we obtain
		\begin{equation}\label{eq:d(0,x_ell)-estimate}
			d(\o,\blopt)+d(\blopt,\o) \leq |S^{(i)}|+d(p^{(i+1)},\o) \!\!\overset{\text{Lem \ref{waoa}}}{\leq}\!\! \opt(t^{(i)})+d(p^{(i+1)},\o).
		\end{equation}
		Furthermore, because $\opt[t^{(i+1)}]$ serves~$\rllaz$ at~$p^{(i+1)}$ before picking up~$r_{f,\opt}^{(i+1)}$ at~$\aafopt$, we have
		\begin{align}
			\opt(t^{(i+1)})&\geq d(\o,p^{(i+1)})+d(p^{(i+1)},\aafopt)\nonumber\\
			\overset{\eqref{eq:dlarge}}&{>} d(\o,p^{(i+1)}) + \alpha\cdot\opt(t^{(i+1)})-\opt(t^{(i)}).\label{eq:d(0,x_i+1)-estimate}
		\end{align}
		Combining all of the above yields
		\begin{align*}
			\opt(t^{(i+1)})
			\overset{\eqref{eq:case221_opt_estimate1},\eqref{eq:d(0,x_ell)-estimate}}&{\geq} \alpha\cdot\opt(t^{(i)})+d(\aafopt,p^{(i+1)})\\
			&-\frac{\opt(t^{(i)})+d(p^{(i+1)},\o)}{2} -d(\o,p^{(i+1)})\\[-.1cm]
			\overset{\eqref{eq:d(0,x_i+1)-estimate}}&{>} \alpha\cdot\opt(t^{(i)})+d(\aafopt,p^{(i+1)})-\frac{\opt(t^{(i)})}{2}\\
			&\hspace*{1cm}-\frac{3}{2}\left(\opt(t^{(i)})-(\alpha-1)\opt(t^{(i+1)})\right) \\[-.1cm]
			&= \left(\frac{3}{2}\alpha-\frac{3}{2}\right)\opt(t^{(i+1)})-(2-\alpha)\opt(t^{(i)})+d(\aafopt,p^{(i+1)}) \\[-.1cm]
			\overset{\eqref{eq:dlarge}}&{>}  \left(\frac{3}{2}\alpha-\frac{3}{2}\right)\opt(t^{(i+1)})-(2-\alpha)\opt(t^{(i)})\\
			&\hspace*{1cm} +\alpha\cdot\opt(t^{(i+1)})-\opt(t^{(i)}) \\[-.1cm]
			&= \left(\frac{5}{2}\alpha-\frac{3}{2}\right)\opt(t^{(i+1)})-(3-\alpha)\opt(t^{(i)}) \\[-.1cm]
			\overset{\text{Obs \ref{defobs}a)}}&{\geq} \left(\frac{5}{2}\alpha-\frac{3}{2}\right)\opt(t^{(i+1)})-\left(\frac{3}{\alpha}-1\right)\opt(t^{(i+1)})\\[-.1cm]
			&=\left(	\frac{5}{2}\alpha-\frac{1}{2}-\frac{3}{\alpha}\right)\opt(t^{(i+1)}) \\[-.1cm]
			&\geq  \opt(t^{(i+1)})
		\end{align*}
		where the last inequality holds if and only if $\alpha\geq \frac{1}{10}\cdot(3+\sqrt{129})\approx 1.436$. As this is a contradiction, we have that $\opt[t^{(i+1)}]$ serves all requests in $R^{(i)}$ before picking up~$r_{f,\opt}^{(i+1)}$ in~$\aafopt$. This completes the proof of the claim. \\
		
		Now that we have established the claim, we turn back to the proof of \cref{rightorder3}.
		Let~$T\geq 0$ denote the time it takes $\opt[t^{(i+1)}]$ until it has served $\rlopt$, i.e., all requests from $R^{(i)}$. First, observe that
		\begin{equation}\label{eq:Tlower}
			T \geq \opt(t^{(i)}).
		\end{equation}
		By the claim, we have
		\begin{equation}\label{eq:case222_opt_i+1_estimate}
			\opt(t^{(i+1)}) \geq T+d(\blopt,\aafopt)+|S(R^{(i+1)},\aafopt)|.
		\end{equation}
		The algorithm $\wait(\alpha)$ finishes $R^{(i+1)}$ by time
		\begin{align}
			t^{(i+1)}\!+\!S^{(i+1)} \overset{\text{Lem \ref{rightorder2}}}&{=}
			\alpha\cdot\opt(t^{(i)})+|S^{(i)}|+|S^{(i+1)}| \nonumber \\
			&\leq \alpha\cdot\opt(t^{(i)})+|S^{(i)}|+d(p^{(i+1)},\aafopt)+|S(R^{(i+1)},\aafopt)| \nonumber \\
			&\leq \alpha\cdot\opt(t^{(i)})+|S^{(i)}|+d(p^{(i+1)},\blopt)+d(\blopt,\aafopt) \nonumber \\
			& \hspace{1cm}+|S(R^{(i+1)},\aafopt)| \nonumber \\[-.2cm]
			\overset{\eqref{eq:case222_opt_i+1_estimate}}&{\leq} \alpha\cdot\opt(t^{(i)}) \!+\! |S^{(i)}| \!+\! d(p^{(i+1)},\blopt) \!+\! \opt(t^{(i+1)}) \!-\! T.\label{eq:case222_alg_estimate1}
		\end{align}
		As $S^{(i)}$ visits $\blopt$ before $p^{(i+1)}$ and $\opt[t^{(i+1)}]$ visits $p^{(i+1)}$ before $\blopt$,
		\begin{align}
			|S^{(i)}|+T &\geq \left(d(\o,\blopt)+d(\blopt,p^{(i+1)})\right) \nonumber\\
			&\hspace*{1cm}+\left(d(\o,p^{(i+1)})+d(p^{(i+1)},\blopt)\right) \nonumber \\
			&= 2\cdot d(p^{(i+1)},\blopt)+d(\o,\blopt)+d(\o,p^{(i+1)})\nonumber
			\\&\geq 3\cdot d(p^{(i+1)},\blopt). \label{eq:third}
		\end{align}
		Combined, we obtain that the algorithm finishes not later than
		\begin{align*}
			t^{(i+1)}+|S^{(i+1)}| \overset{\text{\cref{eq:case222_alg_estimate1}}}&{\leq} \alpha\cdot\opt(t^{(i)}) + |S^{(i)}| + d(p^{(i+1)},\blopt) + \opt(t^{(i+1)}) - T \\[-.1cm]
			\overset{\eqref{eq:third}}&{\leq} \alpha\cdot\opt(t^{(i)}) + |S^{(i)}| + \frac{|S^{(i)}|+T}{3} + \opt(t^{(i+1)}) - T \\[-.1cm]
			\overset{\text{Obs \ref{defobs}a)}}&{\leq} 2 \cdot \opt(t^{(i+1)})+\frac{4}{3} |S^{(i)}|-\frac{2}{3}T\\[-.1cm]
			\overset{\text{Lem \ref{waoa}, \cref{eq:Tlower}}}&{\leq} 2\cdot \opt(t^{(i+1)}) + \frac{2}{3} \opt (t^{(i)})\\[-.1cm]
			\overset{\text{Obs \ref{defobs}a)}}&{\leq} \left( 2+ \frac{2}{3\alpha} \right) \cdot \opt(t^{(i+1)})\\[-.1cm]
			&\leq (1+\alpha) \cdot \opt(t^{(i+1)})
		\end{align*}
		where the last inequality holds if and only if $\alpha \geq \frac{1}{2}+\sqrt{11/12}$.
	\qed\end{proof}
	
	The above results enable us to prove \cref{mainthm}.
	\begin{proof}[of \cref{mainthm}]
		Our goal was to prove by induction that every schedule is $\alpha$-good for $\alpha\geq \frac{1}{2}+\sqrt{11/12}$.
		In \cref{basecase}, we have proven the base case.
		In the induction step, we have distinguished several cases.
		First, we have seen in \cref{interrupted} that the induction step holds if the previous schedule was interrupted.
		Next, we have seen in \cref{waoa} that the induction hypothesis implies $|S^{(i+1)}|\leq\opt(t^{(i+1)})$.
		If the previous schedule was not interrupted, we have first seen in \cref{wrongorder} that the induction step holds if $\opt[t^{(i+1)}]$ loads $r_{f,\opt}^{(i+1)}$ before serving $\rlopt$.
		If $\opt[t^{(i+1)}]$ serves $\rlopt$ before loading $r_{f,\opt}^{(i+1)}$, the induction step holds by \cref{rightorder1} and \cref{rightorder3}.
	\qed\end{proof}

	\section{Analysis on the half-line}
	
	In this section, we prove that $\lazy$ is even better if the metric space considered is the half-line.
	In particular, we prove \cref{half-line_result}.
	To do this, we begin by showing that $\alpha+1$ is an upper bound on the competitive ratio of $\wait(\alpha)$ for $\alpha=\smash{\frac{1+\sqrt{3}}{2}}\approx1.366$.
	Later, we complement this upper bound with a lower bound construction and show that, for all $\alpha\geq0$, $\wait(\alpha)$ has a competitive ratio of at least $\smash{\frac{3+\sqrt{3}}{2}}\approx2.366$.
	
	Since, for all \mbox{$\alpha\geq\frac{1+\sqrt{3}}{2}\thickapprox 1.366$}, Observations~\ref{basecase} and~\ref{interrupted}, as well as Lemmas~\ref{waoa}-\ref{rightorder2} hold, it remains to show a counterpart to \cref{rightorder3} for $\smash{\alpha\geq\frac{1+\sqrt{3}}{2}}$ on the half-line.
	Similarly to the proof of \cref{mainthm}, combining Observations~\ref{basecase} and~\ref{interrupted} and Lemmas~\ref{waoa}-\ref{rightorder2} with \cref{rightorder3_half_line} then yields \cref{half-line_result}.
	
	\begin{lemma}\label{rightorder3_half_line}
		Let $\frac{1+\sqrt{3}}{2} \leq\alpha\leq 2$, and let $M=\R_{\geq0}$.
		 Assume that the $i$-th schedule is $\alpha$-good and was not interrupted, and that $\opt[t^{(i+1)}]$ serves $\rllaz$ before collecting $r_{f,\opt}^{(i+1)}$. If \eqref{eq:dlarge} holds, then $t^{(i+1)}+|S^{(i+1)}|\leq (1+\alpha) \cdot \opt (t^{(i+1)})$.
	\end{lemma}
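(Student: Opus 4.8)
The plan is to mirror the proof of \cref{rightorder3} but to replace its single geometric inequality \eqref{eq:third} by a sharper one that is only available on the half-line. First I would dispose of the trivial sub-case exactly as before: if $\wait(\alpha)$ waits after $S^{(i)}$, then $t^{(i+1)}=\alpha\cdot\opt(t^{(i+1)})$, and $t^{(i+1)}+|S^{(i+1)}|\le(1+\alpha)\cdot\opt(t^{(i+1)})$ follows immediately from \cref{waoa}; so I assume $t^{(i+1)}=\ti+|S^{(i)}|$. Since \eqref{eq:dlarge} holds and $S^{(i)}$ was not interrupted, \cref{rightorder2} applies and gives $p^{(i)}=\o$ and $\ti=\alpha\cdot\opt(\ti)$, so that $S^{(i)}$ starts at the origin and $t^{(i+1)}+|S^{(i+1)}|=\alpha\cdot\opt(\ti)+|S^{(i)}|+|S^{(i+1)}|$.

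Next I would re-examine the Claim of \cref{rightorder3} — that $\opt[t^{(i+1)}]$ serves all of $R^{(i)}$ before collecting $\aafopt$ — and check that on the half-line its contradiction derivation already goes through for $\alpha\ge\frac{1+\sqrt3}{2}$, since replacing distances by signed coordinates can only make the relevant triangle-inequality steps tighter. Granting the Claim, the analysis up to the two structural estimates of \cref{rightorder3} uses no metric specifics, so I retain both: the optimum bound $\opt(t^{(i+1)})\ge T+d(\blopt,\aafopt)+|S(R^{(i+1)},\aafopt)|$ from \eqref{eq:case222_opt_i+1_estimate}, where $T$ is the time $\opt[t^{(i+1)}]$ needs to finish $R^{(i)}$, and the algorithmic estimate \eqref{eq:case222_alg_estimate1}, namely $t^{(i+1)}+|S^{(i+1)}|\le\alpha\cdot\opt(\ti)+|S^{(i)}|+d(p^{(i+1)},\blopt)+\opt(t^{(i+1)})-T$.

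The heart of the argument is to improve \eqref{eq:third}. Let $m$ be the rightmost position that has to be visited to serve $R^{(i)}$, so that $m\ge p^{(i+1)}$ and $m\ge\blopt$. On the half-line both $S^{(i)}$ (which starts at $\o$ and ends at $p^{(i+1)}$) and $\opt[t^{(i+1)}]$ (which starts at $\o$ and finishes $R^{(i)}$ at $\blopt$) must travel out to $m$, so $|S^{(i)}|\ge 2m-p^{(i+1)}$ and $T\ge 2m-\blopt$; adding these gives $|S^{(i)}|+T\ge 4m-p^{(i+1)}-\blopt$, which is the half-line replacement for \eqref{eq:third} and beats $3\cdot d(p^{(i+1)},\blopt)$ precisely when $m$ is bounded away from $\max\{p^{(i+1)},\blopt\}$. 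In that regime it yields $d(p^{(i+1)},\blopt)\le\frac14(|S^{(i)}|+T)$, and feeding this into \eqref{eq:case222_alg_estimate1} together with $|S^{(i)}|\le\opt(\ti)$ (\cref{waoa}) and $T\ge\opt(\ti)$ (as in \eqref{eq:Tlower}) reduces the goal to $\alpha^2-\alpha\ge\tfrac12$, i.e.\ to $\alpha\ge\frac{1+\sqrt3}{2}$ — exactly the claimed threshold. The complementary regime, where $m$ is close to $\max\{p^{(i+1)},\blopt\}$ and the $4m$-bound degrades back to the old factor $3$, is where \eqref{eq:dlarge} must be invoked: combined with $\opt(t^{(i+1)})\ge d(\o,p^{(i+1)})+d(p^{(i+1)},\aafopt)$ and the bound of \eqref{eq:case222_opt_i+1_estimate}, the largeness of $d(p^{(i+1)},\aafopt)$ forces $\opt(t^{(i+1)})$ to exceed $\alpha\cdot\opt(\ti)$ by enough to close the gap.

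I expect the main obstacle to be exactly this interplay: neither the $m$-based inequality nor \eqref{eq:dlarge} suffices on its own, and one has to split on the position of $\aafopt$ relative to $m$ and on the order of $p^{(i+1)}$ and $\blopt$ on the line, tracking in each branch which of the competing lower bounds on $\opt(t^{(i+1)})$ is binding. This case distinction is the half-line analogue of the binary choices of the factor-revealing program in \Cref{sec:fr_hl}, and verifying that the threshold collapses from $\frac12+\sqrt{11/12}$ to $\frac{1+\sqrt3}{2}$ amounts to checking a quadratic in $\alpha$ in each branch, where I anticipate using the hypothesis $\alpha\le2$ to control the signs of the coefficients so that the extremal branch is the one governed by $\alpha\ge\frac{1+\sqrt3}{2}$.
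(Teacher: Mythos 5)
Your skeleton matches the paper's (reduce to $t^{(i+1)}=\ti+|S^{(i)}|$ via \cref{waoa}, invoke \cref{rightorder2}, retain \eqref{eq:case222_opt_i+1_estimate} and \eqref{eq:case222_alg_estimate1}, finish with the quadratic $\alpha^2-\alpha\geq\tfrac12$), and your two out-and-back inequalities $|S^{(i)}|\geq 2m-p^{(i+1)}$ and $T\geq 2m-\blopt$ are valid. The genuine gap is what you do with them: \emph{adding} them gives $|S^{(i)}|+T\geq 4m-p^{(i+1)}-\blopt\geq 3\,d(p^{(i+1)},\blopt)+2\min\{p^{(i+1)},\blopt\}$, which collapses back to the general-metric bound \eqref{eq:third} as $\min\{p^{(i+1)},\blopt\}\to 0$, and you leave this ``complementary regime'' unresolved, offering only the hope that \eqref{eq:dlarge} closes it; that is not a proof, and there is no evident way to lower-bound $\min\{p^{(i+1)},\blopt\}$ from \eqref{eq:dlarge}. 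The fix is to \emph{not} add the inequalities but to case-split on the order of $p^{(i+1)}$ and $\blopt$: if $p^{(i+1)}\geq\blopt$, your own bound gives $T\geq 2m-\blopt\geq 2p^{(i+1)}-\blopt\geq 2\,d(p^{(i+1)},\blopt)$; if $p^{(i+1)}<\blopt$, then $|S^{(i)}|\geq 2m-p^{(i+1)}\geq 2\blopt-p^{(i+1)}\geq 2\,d(p^{(i+1)},\blopt)$, and hence $T\geq\opt(\ti)\geq|S^{(i)}|\geq 2\,d(p^{(i+1)},\blopt)$ by \eqref{eq:Tlower} and \cref{waoa}. In both cases $d(p^{(i+1)},\blopt)\leq T/2$ holds \emph{unconditionally}; this is exactly the paper's inequality \eqref{half-line_distance2}, proved there by this same case distinction, and feeding it into \eqref{eq:case222_alg_estimate1} gives $t^{(i+1)}+|S^{(i+1)}|\leq\bigl(2+\frac{1}{2\alpha}\bigr)\opt(\tip)\leq(1+\alpha)\opt(\tip)$ for $\alpha\geq\frac{1+\sqrt{3}}{2}$. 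In other words, your ``favorable regime'' computation (which is correct) already covers everything; no branching on the position of $\aafopt$ relative to $m$ and no second use of \eqref{eq:dlarge} is needed.

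The second gap is the Claim. You assert that the contradiction argument of \cref{rightorder3} transfers to the half-line for $\alpha\geq\frac{1+\sqrt{3}}{2}$ because coordinates ``only tighten'' the triangle-inequality steps. The direction is right, but it is not automatic: the general-metric derivation requires $\alpha\geq\frac{1}{10}(3+\sqrt{129})\approx 1.436>1.366$, so for $\alpha\in\bigl[\frac{1+\sqrt{3}}{2},1.436\bigr)$ the Claim is open until you redo the computation. The paper supplies it: it replaces the step $d(\aafopt,\blopt)\geq d(\aafopt,p^{(i+1)})-d(\blopt,\o)-d(\o,p^{(i+1)})$ by $d(\aafopt,\blopt)\geq d(\aafopt,p^{(i+1)})-\max\{d(\blopt,\o),d(\o,p^{(i+1)})\}$, valid on the half-line since $d(x,y)\leq\max\{d(x,\o),d(y,\o)\}$, and the resulting chain (using \eqref{eq:d(0,x_ell)-estimate}, \eqref{eq:d(0,x_i+1)-estimate}, \eqref{eq:dlarge}, and \cref{defobs}a), where the hypothesis $\alpha\leq 2$ is needed to apply \cref{defobs}a) to the term $(\alpha-2)\opt(\ti)$) yields a contradiction already for $\alpha\geq\frac{4}{3}$. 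Without this computation, the inequality \eqref{eq:case222_opt_i+1_estimate} on which your whole argument rests is an unverified assertion.
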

	\begin{proof}
		First, observe that, in \cref{rightorder3}, we have the same assumptions except that we worked on general metric spaces. Therefore, all the inequalities shown in \cref{rightorder3} hold in this setting, too, so that we can use them for our proof. Next, note that on the half-line, we have for any $x,y \in M$
		\begin{equation}\label{eq:half-line}
		d(x,y)\leq \max \{d(x,\o),d(y,\o)\}.
		\end{equation}
		We show that this implies that a similar claim as in \cref{rightorder3} holds.\\
		\emph{Claim: $\opt[t^{(i+1)}]$ serves all requests in $R^{(i)}$ before picking up~$r_{f,\opt}^{(i+1)}$ in~$\aafopt$.}\\	
		To prove the claim, assume otherwise, i.e., that $\opt[t^{(i+1)}]$ serves $\rlopt$  after collecting $r_{f,\opt}^{(i+1)}$.
		The request~$r_{f,\opt}^{(i+1)}$ is released after schedule~$S^{(i)}$ is started, i.e., after time $\alpha\cdot\opt(t^{(i)})$. Thus,
		\begin{align}\label{eq:case221_opt_estimate1_half-line}
		\opt(t^{(i+1)}) &\geq \alpha\cdot\opt(t^{(i)})+d(\aafopt,\blopt) \nonumber \\
		& \geq \alpha\cdot\opt(t^{(i)})+d(\aafopt,p^{(i+1)}) - d(\blopt, p^{(i+1)}) \nonumber\\
		\overset{\cref{eq:half-line}}&{\geq} \alpha\cdot\opt(t^{(i)})+d(\aafopt,p^{(i+1)}) \nonumber\\
		&\hspace*{1cm}-\max\{d(\blopt,\o),d(\o,p^{(i+1)})\}.
		\end{align}
		Combining the above with the results from the proof of \cref{rightorder3} yields
		\begin{align*}
		\opt(t^{(i+1)})
		\overset{\eqref{eq:case221_opt_estimate1_half-line},\eqref{eq:d(0,x_ell)-estimate}}&{\geq} \alpha\cdot\opt(t^{(i)})+d(\aafopt,p^{(i+1)})\\
		&\hspace{1cm}-\max\Bigl\{\frac{\opt(t^{(i)})+d(p^{(i+1)},\o)}{2},d(\o,p^{(i+1)})\Bigr\} \\
		\overset{\eqref{eq:d(0,x_i+1)-estimate}}&{>} \alpha\cdot\opt(t^{(i)})+d(\aafopt,p^{(i+1)}) \\
		&\hspace*{1cm}-\max\Bigl\{\frac{2\opt(t^{(i)})-(\alpha-1)\opt(t^{(i+1)})}{2},\\
		&\hspace*{4cm}\opt(t^{(i)})-(\alpha-1)\opt(t^{(i+1)})\Bigr\} \\
		& = \frac{\alpha-1}{2}\opt(t^{(i+1)}) + (\alpha-1)\opt(t^{(i)}) + d(\aafopt,p^{(i+1)}) \\
		\overset{\eqref{eq:dlarge}}&{>}  \Bigl(\frac{\alpha-1}{2}+\alpha\Bigr)\opt(t^{(i+1)}) + (\alpha-2)\opt(t^{(i)}) \\
		\overset{\text{Obs \ref{defobs}a)}}&{\geq} \Bigl(\frac{\alpha-1}{2}+\alpha+\frac{\alpha-2}{\alpha}\Bigr)\opt(t^{(i+1)}) \\
		& \geq \opt(t^{(i+1)})
		\end{align*}
		where the last inequality holds for all $\alpha\geq\frac{4}{3}$. As this is a contradiction, we have that $\opt[t^{(i+1)}]$ serves all requests in $R^{(i)}$ before picking up~$r_{f,\opt}^{(i+1)}$ in~$\aafopt$. This completes the proof of the claim.\\
		
		Now that we have established the claim, we turn back to the proof of \cref{rightorder3_half_line}. Let~$T\geq 0$ denote the time it takes $\opt[t^{(i+1)}]$ until it has served $\rlopt$, i.e., all requests from $R^{(i)}$. First, observe that
		\begin{equation}\label{eq:Tlower_hl}
		T \geq \opt(t^{(i)}).
		\end{equation}
		If $p^{(i+1)}\geq\blopt$, as $\opt[t^{(i+1)}]$ visits $p^{(i+1)}$ before $\blopt$, we have
		\begin{equation*}
		T \geq d(\o,p^{(i+1)}) + d(p^{(i+1)},\blopt) \overset{\eqref{eq:half-line}}{\geq} 2\cdot d(p^{(i+1)},\blopt).
		\end{equation*}
		Otherwise, if $p^{(i+1)}<\blopt$, as $S^{(i)}$ visits $\blopt$ before $p^{(i+1)}$, we have
		\begin{equation*}
		T \overset{\eqref{eq:Tlower_hl}}{\geq} \opt(t^{(i)}) \geq |S^{(i)}| \geq d(\o,\blopt) + d(\blopt,p^{(i+1)}) \overset{\eqref{eq:half-line}}{\geq} 2\cdot d(\blopt,p^{(i+1)}).
		\end{equation*}
		Thus, in either case, we have
		\begin{equation}\label{half-line_distance2}
		d(\blopt,p^{(i+1)}) \leq \frac{T}{2}.
		\end{equation}
		Combined, we obtain that the algorithm finishes not later than
		\begin{align*}
		t^{(i+1)}+|S^{(i+1)}| \overset{\text{\cref{eq:case222_alg_estimate1}}}&{\leq} \alpha\cdot\opt(t^{(i)}) + |S^{(i)}| + d(p^{(i+1)},\blopt) + \opt(t^{(i+1)})- T \\
		\overset{\eqref{half-line_distance2}}&{\leq} \alpha\cdot\opt(t^{(i)}) + |S^{(i)}| + \frac{T}{2} + \opt(t^{(i+1)}) - T \\
		\overset{\text{Obs \ref{defobs}a)}}&{\leq} 2 \cdot \opt(t^{(i+1)}) + |S^{(i)}| - \frac{T}{2} \\
		\overset{\text{Lem \ref{waoa}, \cref{eq:Tlower_hl}}}&{\leq} 2\cdot \opt(t^{(i+1)}) + \frac{1}{2} \opt(t^{(i)})\\
		\overset{\text{Obs \ref{defobs}a)}}&{\leq} \left( 2+ \frac{1}{2\alpha} \right) \cdot \opt(t^{(i+1)})\\
		&\leq (1+\alpha) \cdot \opt(t^{(i+1)})
		\end{align*}
		where the last inequality holds if and only if $\alpha \geq \frac{1+\sqrt{3}}{2}\thickapprox 1.366$.
		\qed
	\end{proof}

	\subsection{Lower bound on the half-line}

In this section, we prove Theorem~\ref{half-line_lower-bound}, i.e., we show that $\lazy(\alpha)$ has a competitive ratio of at least $\smash{\frac{3+\sqrt{3}}{2}}\thickapprox 2.366$ for every choice of the parameter $\alpha$, even when the metric space is the half line.
Our proof builds on some lower bound constructions given in \cite{waoa}. 
We begin by restating needed results.

	\begin{lemma}[\cite{waoa}]\label{lem:1plusalpha}
		$\lazy(\alpha)$ has competitive ratio at least $1+\alpha$ for the open online dial-a-ride problem on the half-line for all $\alpha \geq 0$ and every capacity~$c \in \N \cup \{\infty\}$. 
	\end{lemma}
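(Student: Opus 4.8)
The plan is to exhibit a single instance on which $\lazy(\alpha)$ performs a factor of exactly $1+\alpha$ worse than the optimum, exploiting the fact that the algorithm idles at the origin until time $\alpha\cdot\opt(t)$ before committing to any schedule. Concretely, I would release just one request $r=(1,1;0)$ at distance~$1$ from the origin on the half-line $M=\R_{\geq0}$. Since $a=b$, this is a TSP-type request, and the construction is valid for every capacity $c\in\N\cup\{\infty\}$ because the server never loads more than one request at a time.

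First I would compute the optimum: an offline server can walk from $\o$ straight to position~$1$ and serve $r$, so $\opt(\sigma)=\opt(0)=1$, and this value never changes since no further requests appear. Next I would trace the behaviour of $\lazy(\alpha)$. Upon receiving $r$ at time~$0$ the server has nothing loaded and sits at $\o$, so the $\deliver$-check succeeds trivially and its execution does nothing. The server then becomes idle at time~$0$ with $t=0<\alpha\cdot\opt(0)=\alpha$, so it executes $\waituntil(\alpha)$ and remains at the origin; crucially, it does not move toward the request during this waiting phase. At time $\alpha$ it becomes idle again with $t=\alpha$ and $\alpha\cdot\opt(\alpha)=\alpha$, so the waiting branch no longer applies, and since $R_\alpha=\{r\}\neq\emptyset$ it starts the schedule $S^{(1)}=S(\{r\},\o)$, which has length~$1$. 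Hence the server reaches position~$1$ and completes at time $\alpha+1$, giving $\lazy(\alpha)(\sigma)=\alpha+1$ and a ratio of $\lazy(\alpha)(\sigma)/\opt(\sigma)=\alpha+1$. As the competitive ratio is the supremum over all instances, this single instance already certifies that it is at least $1+\alpha$.

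The only real care needed is in the bookkeeping of the idle/receive logic of the algorithm: I must verify that the waiting parameter forces the start time of the first schedule to be exactly $\alpha\cdot\opt(0)=\alpha$ rather than the server setting off earlier, and that the $\waituntil$ command indeed pins the server at $\o$ throughout. Since $\opt(t)$ is constant in $t$ for this single-request instance, these checks are immediate and hold uniformly for all $\alpha\geq0$, with the boundary case $\alpha=0$ giving the trivial ratio~$1$. I do not expect any genuine obstacle here; the lemma is a clean warm-up construction whose role is to dominate the competitive ratio when $\alpha$ is large, later to be combined with a complementary construction that handles small~$\alpha$ and pushes the overall bound up to $\tfrac{3+\sqrt3}{2}$.
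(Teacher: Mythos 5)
Your construction is correct and is essentially the proof from~\cite{waoa} that this paper cites rather than reproves: a single request $(1,1;0)$ forces $\lazy(\alpha)$ to idle at the origin until time $\alpha\cdot\opt(0)=\alpha$ and finish at $\alpha+1$, while $\opt=1$. The bookkeeping of the receive/idle logic and the capacity-independence are exactly as you describe, so nothing is missing.
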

		
The following bound holds, since \cite[Proposition 2]{waoa} only uses the half line.
	
	\begin{lemma}[\cite{waoa}]\label{thm:lower_bound_alpha-leq-1}
		For every $\alpha\in[0,1)$, $\wait (\alpha)$ has a competitive ratio of at least $1+\frac{3}{\alpha+1}>2.5$ for the open online dial-a-ride problem on the half-line for every capacity $c \in \N \cup \{\infty\}$. 
	\end{lemma}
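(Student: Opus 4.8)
The plan is to derive this statement as a corollary of the corresponding lower bound already proved in \cite{waoa}. In its Proposition~2, that paper exhibits, for every $\alpha\in[0,1)$, an explicit adversarial request sequence forcing $\wait(\alpha)$ to a completion time at least $\bigl(1+\tfrac{3}{\alpha+1}\bigr)\cdot\opt$. The single additional point to establish is that this sequence can be realized on $M=\R_{\ge0}$. I would therefore reproduce the construction request by request and check that every request $(a,b;t)$ it releases satisfies $a,b\in\R_{\ge0}$. Because such an instance is, in particular, a valid instance of the half-line problem, it witnesses the same lower bound on the competitive ratio of $\wait(\alpha)$ when $M=\R_{\ge0}$; and since the construction can be taken to use only visit requests (with $a=b$), whose optimum is independent of the capacity, the bound holds simultaneously for every $c\in\N\cup\{\infty\}$.

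To see why such an instance exists, it helps to note that for $\alpha<1$ the waiting discipline of $\wait(\alpha)$ degenerates: the threshold $\alpha\cdot\opt(t)$ lies below the true optimum completion time, so the server barely waits and returns to the origin whenever the weak interruption test $t+d(p_t,\o)\le\alpha\cdot\opt(t)$ is met. An adversary exploits this by first releasing a request that draws the server out along the half-line, then releasing a request that triggers a premature return to the origin, and finally releasing a far request that must still be served from the origin. The optimum serves all of these in one monotone sweep to the right, whereas $\wait(\alpha)$ pays for the induced back-and-forth; positions and release times are tuned so that, after normalizing $\opt=1$, the algorithm cannot finish before time $1+\tfrac{3}{\alpha+1}$.

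The final inequality $1+\tfrac{3}{\alpha+1}>2.5$ is immediate: for $\alpha\in[0,1)$ we have $\alpha+1<2$, hence $\tfrac{3}{\alpha+1}>\tfrac32$. I expect the main obstacle to lie not in this arithmetic but in the careful bookkeeping of the borrowed construction: one must confirm that no request ever requires a negative coordinate (so that the instance truly lives on the half-line) and that the server's trajectory---in particular, exactly which released request triggers the interruption and return to the origin---is traced correctly in the regime $\alpha<1$, which falls outside the range $\alpha\ge1$ for which $\wait$ was originally designed. Verifying that this degenerate behaviour still matches the case distinction underlying Proposition~2 of \cite{waoa} is the delicate step.
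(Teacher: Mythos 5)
Your proposal matches the paper's own treatment: the paper's entire argument for this lemma is the one-line observation that the construction in \cite[Proposition 2]{waoa} already lives on the half-line (all requests have nonnegative coordinates), so the lower bound of $1+\frac{3}{\alpha+1}>2.5$ transfers verbatim, capacity-independence included. Your additional speculation about the mechanics of the adversarial sequence is not needed for (and not part of) the paper's citation-based proof, but the core logical move you propose is exactly the one the paper makes.
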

	
	Combining these two results gives that, for $\alpha\in[0,1)\cup[\frac{1+\sqrt{3}}{2},\infty)$, the competitive ratio of $\wait(\alpha)$ is at least $\smash{\frac{3+\sqrt{3}}{2}}\thickapprox 2.366$.
	
The next proposition closes the gap between $\alpha<1$ and $\alpha \geq 1.366$ and completes the proof of Theorem~\ref{half-line_lower-bound}. An overview of the lower bounds for different domains of $\alpha$ can be found in Figure~\ref{fig:plot_lower_bounds}.
	\begin{proposition}\label{thm:lower_bound_alpha-geq-1}
		For every $\alpha\in [1,1.366)$, the algorithm $\wait (\alpha)$ has a competitive ratio of at least $2+\frac{1}{2\alpha}$ for the open online dial-a-ride problem on the half-line for every capacity $c \in \N \cup \{\infty\}$.
	\end{proposition}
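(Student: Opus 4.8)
The plan is to exhibit, for each fixed $\alpha\in[1,\frac{1+\sqrt3}{2})$, one explicit request sequence on $M=\R_{\geq0}$ on which $\wait(\alpha)$ is forced to use at least $(2+\frac1{2\alpha})\cdot\opt$ time, and to read off the ratio directly. I would normalise so that the relevant optimum equals $1$ (legitimate by scale-invariance of the competitive ratio) and build on the construction behind \cref{thm:lower_bound_alpha-leq-1}: its value $1+\frac{3}{\alpha+1}$ already meets $2+\frac1{2\alpha}=\tfrac52$ at $\alpha=1$, so the task is to re-tune its positions and release times to the window $\alpha\geq1$ so that the bound degrades gracefully to $\frac{3+\sqrt3}{2}$ at $\alpha=\frac{1+\sqrt3}{2}$. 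To keep the statement uniform over all capacities $c\in\N\cup\{\infty\}$ I would use an instance in which no two requests are ever profitably co-loaded (e.g.\ built from ``visit'' requests together with one carried transport request), so that $\wait(\alpha)$ and $\opt$ behave identically for every $c\geq 1$.

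The qualitative mechanism I would target is the tight case isolated in \cref{rightorder3_half_line}: force $\wait(\alpha)$ to idle-wait until time $\alpha\cdot\opt(t^{(1)})$ and then commit irrevocably to a schedule $S^{(1)}$ that ends far from where the decisive request will appear, while $\opt$ serves the same early requests along a tour that leaves it ideally positioned. Concretely, release an initial batch at time $0$ that fixes $\opt(t^{(1)})$ and the shape (and endpoint $p^{(2)}$) of $S^{(1)}$; then, at a precisely chosen instant $\tau>t^{(1)}$ — strictly after $S^{(1)}$ starts, so the request is genuinely new — release a request positioned so that $\opt$, already en route, sweeps over it near the end of its tour at no release-time penalty, whereas $\wait(\alpha)$, pinned to $S^{(1)}$, can only reach it after completing $S^{(1)}$ and performing a costly back-and-forth $S^{(2)}$. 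The additive gain $\frac1{2\alpha}$ beyond the naive factor $2$ should come, as in the upper bound, from the half-line identity $d(x,y)\leq\max\{d(x,\o),d(y,\o)\}$ of \cref{eq:half-line}, which lets $\opt$'s outward leg and return leg share distance that $\wait(\alpha)$ must pay for twice.

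Two behavioural checks drive the parameter choice and must be discharged explicitly. First, the decisive request must \emph{not} trigger an interruption: at time $\tau$ the time for $\wait(\alpha)$ to deliver its loaded requests and return to $\o$ has to exceed $\alpha\cdot\opt(\tau)$, so I would keep the server loaded on $S^{(1)}$ with a request whose destination makes ``deliver-and-return'' expensive relative to $\alpha\cdot\opt(\tau)$ — this is the analogue of assuming \cref{eq:dlarge}. Second, after $S^{(1)}$ the server must not benefit from idle-waiting before $S^{(2)}$: this needs $t^{(1)}+|S^{(1)}|\geq\alpha\cdot\opt(t^{(2)})$, i.e.\ $\opt(t^{(2)})\leq\frac{\alpha+1}{\alpha}\opt(t^{(1)})$, which caps how far the new request may sit and how much $\opt$ may grow. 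I therefore expect a split into ``$\wait(\alpha)$ waits before $S^{(2)}$'' and ``it does not'': in the waiting case the completion time is at most $(1+\alpha)\opt$ and the claim follows already from \cref{lem:1plusalpha} (since $1+\alpha\geq 2+\frac1{2\alpha}$ fails here, this case must be excluded by the parameters rather than used, so the real content is confining the instance to the non-waiting regime), while the non-waiting case yields the full ratio.

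The hardest part will be reconciling these requirements simultaneously. The decisive request must be released after $t^{(1)}=\alpha\cdot\opt(t^{(1)})$ to be new, yet be servable by $\opt$ without waiting to keep $\opt(t^{(2)})$ small; it must be far enough from the end of $S^{(1)}$ to force a long $|S^{(2)}|$, but close enough to forbid interruption; and the endpoint $p^{(2)}$ of the algorithm's (ties-broken) shortest schedule must differ from where $\opt$ ends its early tour, which on the half-line is delicate because shortest-schedule endpoints are largely forced. I would resolve this by tracking the exact trajectory of $\wait(\alpha)$ (position, loaded set, successive idle instants) and the exact value of $\opt$ on the concrete instance, then optimising the free parameters — the positions, $\tau$, and $\opt(t^{(2)})/\opt(t^{(1)})$ — subject to the non-interruption and non-waiting inequalities, to make the ratio hit $2+\frac1{2\alpha}$ exactly for every $\alpha$ in the range. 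This optimisation is precisely the kind of bound the factor-revealing linear program of \cref{sec:factorRevealing} is designed to locate; I would use it to pin the parameters and then verify the resulting instance analytically, finally checking that the degenerate sub-cases (interrupted $S^{(1)}$, or $i=1$ boundary effects) only strengthen the bound.
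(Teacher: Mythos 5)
Your strategy coincides with the paper's: an explicit half-line instance consisting of an initial batch at time $0$ that fixes $\opt(t^{(1)})$ and forces $\wait(\alpha)$'s schedule to end at the wrong place, plus one late request that $\opt$ sweeps over at zero release-time penalty but that the algorithm can neither interrupt for nor reach cheaply; your two behavioural checks (non-interruption, and no idle-waiting before $S^{(2)}$) are exactly the conditions the paper verifies. However, as written the proposal has a genuine gap: it never exhibits the instance. Everything that constitutes the proof — the positions, the release times, and the verification that your constraints can be met \emph{simultaneously} with ratio exactly $2+\frac{1}{2\alpha}$ — is deferred to ``optimising the free parameters'' via a factor-revealing program. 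Since these constraints pull against each other (you note yourself that the decisive request must be far enough away to force a long $|S^{(2)}|$ yet close enough not to trigger an interruption, and that shortest-schedule endpoints on the half-line are largely forced), their simultaneous satisfiability is precisely the content of the proposition; asserting that an optimisation should locate suitable parameters does not establish it. Note also that the relaxed program need not certify realizability: the paper's own MILP in \cref{sec:fr_hl} has optimal value $3+\frac{1}{\alpha}-\alpha$ on this range, strictly above the realizable bound $2+\frac{1}{2\alpha}$, so pinning parameters by the LP alone would overshoot what an actual instance can achieve.

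For comparison, the paper's proof is a four-request instance: $r_1=(0,1;0)$, $r_2=(1,0;0)$, $r_3=(1,2-\epsilon;0)$, and $r_4=(4\alpha-2,4\alpha-2;4\alpha)$. The $\epsilon$-perturbation resolves exactly the tie-breaking issue you flagged: the algorithm's shortest schedule must serve the batch in the order $(r_1,r_3,r_2)$, has length $4-2\epsilon$, and ends at the origin, while $\opt$ serves $(r_1,r_2,r_3)$ and continues upward, reaching $r_4$ exactly at its release time, so $\opt=4\alpha$. Non-interruption at time $4\alpha$ amounts to $4\alpha+2-2\alpha\epsilon>4\alpha^2$, which holds precisely because $\alpha<\frac{1+\sqrt{3}}{2}$ (this is where the upper end of your interval enters), and no waiting occurs before $S^{(2)}$ since $(1+\alpha)(4-2\epsilon)>4\alpha^2$ for $\alpha<\varphi$. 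The completion time is then $8\alpha+2-(2\alpha+2)\epsilon$, and letting $\epsilon\to0$ gives the ratio $2+\frac{1}{2\alpha}$. Your plan would need to be completed to this level of concreteness — with these verifications carried out for the specific instance — to count as a proof.
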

	
	\begin{figure}[t]
	\begin{center}
		\begin{tikzpicture}[scale=0.77,font=\small,decoration=brace,every node/.style={scale=1}]
			\def\endpt{1pt}
			\def\reqpt{2.9pt}
			
			\def\opt{\textsc{Opt}}
			
			\def\xmax{13}
			\def\ymin{0}
			\def\ymax{4}
			
			\def\alph{1.183}
			\def\eps{.3}
			
			\draw[line width=1pt,black,->,>=stealth] (0,{\ymin})--(0,{\ymax});
			\draw[line width=1pt,black,->,>=stealth] ({-0.17},0)--({\xmax},0);
			\node[above left] at (-0.05,-0.2) {$0$};
			\node[below right, align=left] at (0,{\ymax}) {position};
			\node[above left, align=left] at ({\xmax},0) {time};
			
			\draw [fill=ForestGreen] ({\xmax-1},{\ymax-0.15}) rectangle ({\xmax-1.2},{\ymax-0.35});
			\node[below right,align=left] at ({\xmax-1},{\ymax}) {$\wait$};
			\draw [fill=NavyBlue] ({\xmax-1},{\ymax-0.55}) rectangle ({\xmax-1.2},{\ymax-0.75});
			\node[below right,align=left,] at ({\xmax-1},{\ymax-0.4}) {$\opt$};§
			
			
			\draw[line width=1pt,black,-] (0,-0.095)--(0,0);
			\node[align=center] at (0,-0.3) {$0$};
			
			\draw[line width=1pt,black,-] (\eps,-0.095)--(\eps,0);
			\node[align=center] at (\eps,-0.3) {$0$};
			
			\draw[line width=1pt,black,-] (2*\eps,-0.095)--(2*\eps,0);
			\node[align=center] at (2*\eps,-0.3) {$0$};
			
			\draw[line width=1pt,black,-] (4*\alph,-0.095)--(4*\alph,0);
			\node[align=center] at (4*\alph,-0.3) {$\opt=4\alpha$};
			
			\draw[line width=1pt,black,-] ({8*\alph+2-\eps*(2*\alph+2)},-0.095)--({8*\alph+2-\eps*(2*\alph+2)},0);
			\node[align=center] at ({8*\alph+2-\eps*(2*\alph+2)},-0.3) {$\wait$};

			\draw[line width=1pt,black,-] (-0.1,4*\alph-2)--(0,4*\alph-2);
			\node[left,align=right] at (-0.075,4*\alph-2) {$4\alpha-2$};
			
			\draw[line width=1pt,black,-] (-0.1,2-\eps)--(0,2-\eps);
			\node[left,align=right] at (-0.075,{2-\eps}) {$2-\epsilon$};
			
			\draw[line width=1pt,black,-] (-0.075,1)--++(0.075,0);
			\node[left,align=right] at (-0.075,1) {$1$};

			\filldraw[fill=black] (0,0) circle (\reqpt);
			
			\draw[line width=2.5pt,black,-] (0,0)--({\alph*(4-2*\eps)},0)--({\alph*(4-2*\eps)+2-\eps},2-\eps);
			\draw[line width=1.5pt,ForestGreen,-] (0,0)--({\alph*(4-2*\eps)},0)--({\alph*(4-2*\eps)+2-\eps},2-\eps);
			
			\draw[line width=5pt,black,-] ({\alph*(4-2*\eps)},0)--({\alph*(4-2*\eps)+2-\eps},2-\eps);
			\draw[line width=4pt,VioletRed,-] ({\alph*(4-2*\eps)},0)--({\alph*(4-2*\eps)+1},1);
			\draw[line width=2.5pt,black,-] ({\alph*(4-2*\eps)},0)--({\alph*(4-2*\eps)+1},1);
			\draw[line width=1.5pt,ForestGreen,-] ({\alph*(4-2*\eps)},0)--({\alph*(4-2*\eps)+1},1);
			
			\draw[line width=4pt,Orange,-] ({\alph*(4-2*\eps)+1},1)--({\alph*(4-2*\eps)+2-\eps},2-\eps);
			\draw[line width=2.5pt,black,-] ({\alph*(4-2*\eps)+1},1)--({\alph*(4-2*\eps)+2-\eps},2-\eps);
			\draw[line width=1.5pt,ForestGreen,-] ({\alph*(4-2*\eps)+1},1)--({\alph*(4-2*\eps)+2-\eps},2-\eps);
			
			\draw[line width=2.5pt,black,-] ({\alph*(4-2*\eps)+2-\eps},2-\eps)--(({\alph*(4-2*\eps)+3-2*\eps},1);
			\draw[line width=5pt,black,-] ({\alph*(4-2*\eps)+3-2*\eps},1)--({\alph*(4-2*\eps)+4-2*\eps},0);
			\draw[line width=4pt,Goldenrod,-] ({\alph*(4-2*\eps)+3-2*\eps},1)--({\alph*(4-2*\eps)+4-2*\eps},0);
			\draw[line width=2.5pt,black,-] ({\alph*(4-2*\eps)+3-2*\eps},1)--({\alph*(4-2*\eps)+4-2*\eps},0);
			\draw[line width=1.5pt,ForestGreen,-] ({\alph*(4-2*\eps)+2-\eps},2-\eps)--({\alph*(4-2*\eps)+4-2*\eps},0);
			
			\draw[line width=2.5pt,black,-] ({\alph*(4-2*\eps)+4-2*\eps},0)--({8*\alph+2-\eps*(2*\alph+2)},4*\alph-2);
			\draw[line width=1.5pt,ForestGreen,-] ({\alph*(4-2*\eps)+4-2*\eps},0)--({8*\alph+2-\eps*(2*\alph+2)},4*\alph-2);
			
			\filldraw[fill=VioletRed] 	({\alph*(4-2*\eps)},0) circle (\reqpt);
			\filldraw[fill=Orange] 	({\alph*(4-2*\eps)+1},1) circle (\reqpt);
			\draw[fill=VioletRed,VioletRed] (4.97,1.06) arc (130:310:{\reqpt-.5pt});
			\draw[line width=0.4pt,black,-] (4.96,1.07)--({5.1},0.94);
			
			\filldraw[fill=Orange] 	({\alph*(4-2*\eps)+2-\eps},2-\eps) circle (\reqpt);

			\filldraw[fill=Goldenrod] 	({\alph*(4-2*\eps)+3-2*\eps},1) circle (\reqpt);
			\filldraw[fill=Goldenrod] 	({\alph*(4-2*\eps)+4-2*\eps},0) circle (\reqpt);
			\filldraw[fill=Purple] 	({8*\alph+2-\eps*(2*\alph+2)},4*\alph-2) circle (\reqpt);
			
			\draw[line width=5pt,black,-] (0,0)--(1,1);
			\draw[line width=4pt,VioletRed,-] (0,0)--(1,1);
			\draw[line width=2.5pt,black,-] (0,0)--(1,1);
			\draw[line width=1.5pt,NavyBlue,-] (0,0)--(1,1);

			\draw[line width=5pt,black,-] (1,1)--(2,0);
			\draw[line width=4pt,Goldenrod,-] (1,1)--({2},0);
			\draw[line width=2.5pt,black,-] (1,1)--(2,0);
			\draw[line width=1.5pt,NavyBlue,-] (1,1)--(2,0);
			
			\draw[line width=5pt,black,-] (3,1)--(4-\eps,2-\eps);
			\draw[line width=4pt,Orange,-] (3,1)--(4-\eps,2-\eps);
			\draw[line width=2.5pt,black,-] (2,0)--(4-\eps,2-\eps);
			\draw[line width=1.5pt,NavyBlue,-] (2,0)--(4-\eps,2-\eps);
			\draw[line width=2.7pt,black,-] (4-\eps,2-\eps)--(4*\alph,4*\alph-2);
			\draw[line width=2pt,NavyBlue,-] (4-\eps,2-\eps)--(4*\alph,4*\alph-2);
			
			\filldraw[fill=VioletRed] 	({1},1) circle (\reqpt);
			\draw[fill=Goldenrod,Goldenrod] (1,.92) arc (270:450:{\reqpt-.5pt});
			\draw[line width=0.4pt,black,-] (1,0.9)--(1,1.1);
			\filldraw[fill=Goldenrod] 	({2},0) circle (\reqpt);
			\filldraw[fill=Orange] 	(3,1) circle (\reqpt);
			\filldraw[fill=Orange] 	(4-\eps,2-\eps) circle (\reqpt);

			\draw[line width=2.5pt,black,densely shadow,->,>=stealth] (0,0)--(0,1);
			\filldraw[fill=VioletRed] (0,0) circle (\reqpt);
			\draw[line width=1.5pt,VioletRed,densely dashed,->,>=stealth] (0,0)--(0,1);
			
			\draw[line width=2.5pt,black,densely shadow,->,>=stealth] (\eps,1)--(\eps,2-\eps);
			\filldraw[fill=Orange] (\eps,1) circle (\reqpt);
			\draw[line width=1.5pt,Orange,densely dashed,->,>=stealth] (\eps,1)--(\eps,2-\eps);
			
			\draw[line width=2.5pt,black,densely shadow,->,>=stealth] (2*\eps,1)--(2*\eps,0);
			\filldraw[fill=Goldenrod] (2*\eps,1) circle (\reqpt);
			\draw[line width=1.5pt,Goldenrod,densely dashed,->,>=stealth] (2*\eps,1)--(2*\eps,0);
			
			\filldraw[fill=Purple] (4*\alph,4*\alph-2) circle (\reqpt);
		\end{tikzpicture}
	\end{center}
	\caption{Instance of the open online dial-a-ride problem on the half-line where $\wait(\alpha)$ has a competitive ratio of at least $2+\frac{1}{2\alpha}$ for all $\alpha\in[1,1.366)$.}
	\label{fig:lower_bound_big_alpha}
\end{figure}
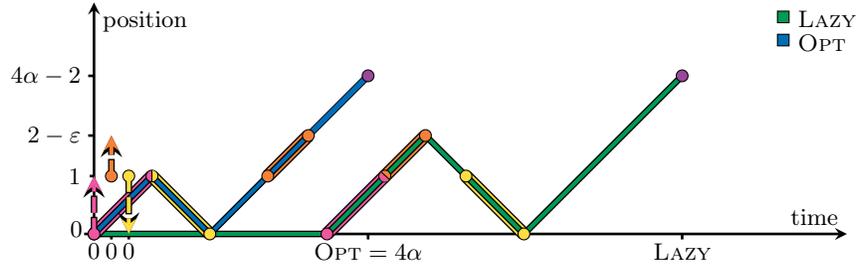
	
\begin{proof}
	Let $\alpha\in [1,1.366)$ and let $\epsilon>0$ be sufficiently small.
	We define an instance (cf. Figure~\ref{fig:lower_bound_big_alpha}) by giving the request sequence
	\begin{align*}
	\quad \quad \quad \quad \quad \quad \quad \quad 
	&r_1=\left(0,1;0\right) \textrm{, } r_2=(1,0;0) \textrm{, } r_3=(1,2-\epsilon;0) \textrm{,}\\
	&\textrm{and } r_4=(4\alpha-2,4\alpha-2;4\alpha).
	\end{align*}
	The offline optimum delivers the requests in the order $(r_1,r_2,r_3,r_4)$ with no waiting times.
	This takes $4\alpha$ time units.
	
	On the other hand because $\opt(0)=4-2\epsilon$, $\wait(\alpha)$ waits in the origin until time $\alpha(4-2\epsilon)$ and starts serving requests $r_1,r_2,r_3$ in the order $(r_1,r_3,r_2)$.
	At time $4\alpha$, request $r_4$ is released.
	At this time, serving the loaded request $r_1$ and returning to the origin takes time
	\begin{equation*}
	\alpha(4-2\epsilon)+2 
	=4\alpha+2 -2\alpha \epsilon
	\overset{\alpha \in [1,1.366), \epsilon \ll 1}{>}4\alpha^2=\alpha\cdot\opt(4\alpha).
	\end{equation*}	 
	Thus, $\wait(\alpha)$ continues its schedule and afterwards serves $r_4$.
	Overall, this takes time (at least)
	$\alpha(4-2\epsilon)+(4-2\epsilon)+4\alpha-2=8\alpha +2 - (2\alpha+2) \epsilon$.
	Letting $\epsilon\to 0$, we obtain that the competitive ratio of $\wait(\alpha)$ is at least
	\begin{align*}
	\lim_{\varepsilon \to 0}\frac{8\alpha +2 - (2\alpha+2) \epsilon}{4\alpha}= 2+\frac{1}{2\alpha}. 
	\end{align*}\qed
\end{proof}

\begin{figure}[t]
	\vspace{.5cm}
	\begin{center}
		\begin{tikzpicture}[scale=0.85]
		\begin{axis}[
		samples=100,
		xmin=0,
		xmax=3,
		xtick={0,1,2,...,4},
		ytick={2,3,4},
		ymin=2,
		ymax=4,
		xlabel=waiting parameter $\alpha$,
		ylabel=competitive ratio $\rho$,
		height=8cm,
		width=12cm,
		ylabel near ticks,
		xlabel near ticks,
		]
		
		\addplot[no markers, thick, domain=1:1.366][black!40!red,line width=1pt, dashed]{x+1};
		\addplot[no markers, thick, domain=1.366:3][black!40!red,line width=1pt]{x+1};
		
		\addplot[no markers, domain=1:1.366][name path=B2,black!40!green,line width=1pt]{2+1/2/x};
		
		\addplot[no markers, domain=0:1][name path=B1,black!40!blue,line width=1pt]{1+(3)/(1+x)};
		
		
		\draw[densely dashed,line width=0.5pt,black!70] (axis cs:1.366,\pgfkeysvalueof{/pgfplots/ymin}) -- (axis cs:1.366,\pgfkeysvalueof{/pgfplots/ymax});
		\draw[densely dashed,line width=0.5pt,black!70] (axis cs:1,\pgfkeysvalueof{/pgfplots/ymin}) -- (axis cs:1,\pgfkeysvalueof{/pgfplots/ymax});
		\addplot[no markers,densely dashed, domain=0:3,line width=0.5pt, black!70]{2.366};
		\addplot[no markers,densely dotted, domain=0:3,line width=0.5pt, black]{2.366};
		
		\addplot[no markers, domain=0:3][name path=A,black,line width=.5pt]{4};
		
		\addplot[black!6!red!15, postaction={pattern=my north east lines,pattern color=gray}] fill between[of=A and B1];
		\addplot[black!6!red!15, postaction={pattern=my north east lines,pattern color=gray}] fill between[of=A and B2];
		
		\node[above] at (axis cs: 1.6,2.05) {$\alpha=1.366$};
		\node[below right] at (axis cs: 2.5,2.52) {$\rho= 2.366$};
		\end{axis}
		\end{tikzpicture}
	\end{center}
	\caption{Lower bounds on the competitive ratio of $\wait(\alpha)$ depending on~$\alpha$. The lower bound of \cref{lem:1plusalpha} is depicted in red, the lower bound of \cref{thm:lower_bound_alpha-leq-1} in blue, and the lower bound of \cref{thm:lower_bound_alpha-geq-1} in green.
	}
	\label{fig:plot_lower_bounds}
\end{figure}
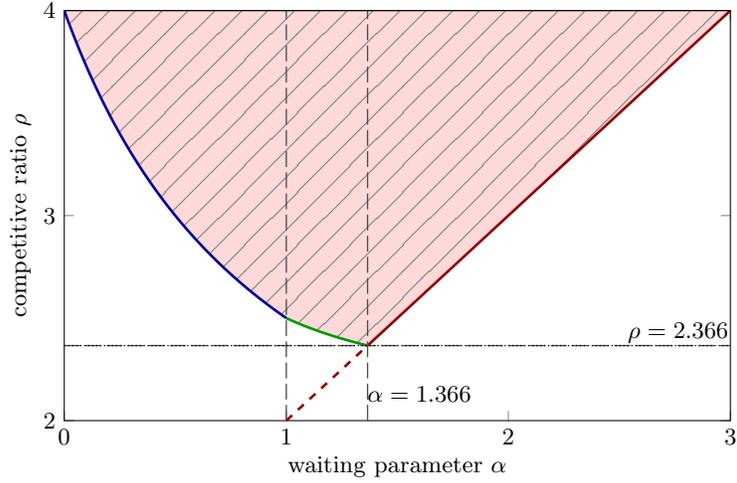

%

\appendix

\section{Factor-revealing approach for the half-line}\label{sec:fr_hl}

We show how to use the factor~revealing approach from Section~\ref{sec:factorRevealing} for the dial-a-ride problem on the half-line.
Consider the following variables (recall that $k\in\N$ is the number of schedules started by $\lazy(\alpha)$).
\begin{itemize}
	\item $t_1 = t^{(k-1)}$, the start time of the second to last schedule
	\item $t_2 = r^{(k)}$, the start time of the last schedule
	\item $s_1 = |S^{(k-1)}|$, the duration of the second to last schedule
	\item $s_2 = |S^{(k)}|$, the duration of the last schedule
	\item $\opt_1 = \opt(t^{(k-1)})$, duration of the optimal tour serving requests released until $t^{(k-1)}$
	\item $\opt_2 = \opt(t^{(k)})$, duration of the optimal tour
	\item $p_1 = p^{(k)}$, the position where $\lazy(\alpha)$ ends the second to last schedule
	\item $p_2 = a^{(k)}_{f,\opt}$, the position of the first request in $R^{(k)}$ picked up first by the optimal tour
	\item $s_2^a = |S(R^{(k)},a^{(k)}_{f,\opt})|$, duration of the schedule serving $R^{(k)}$ starting in $p_2$
	\item $d = d(p^{(k)},a^{(k)}_{f,\opt})$, the distance between $p_1$ and $p_2$
\end{itemize}
With these variables
\[
x=\bigl(t_1,t_2,s_1,s_2,\opt_1,\opt_2,p_1,p_2,s_2^a,d\bigr),
\]
we can create the following valid optimization problem.
\begin{alignat}{4}
\max && \quad t_2 + s_2\nonumber \\
\text{s.t.} && \opt_2 = &\ 1\label{factorRevealingLP_1} \\
&& d = &\ |p_1 - p_2|\label{factorRevealingLP_2} \\
&& t_2 = &\ \max\{t_1+s_1,\alpha\opt_2\}\label{factorRevealingLP_3} \\
&& t_1 \geq &\ \alpha\opt_1\label{factorRevealingLP_4} \\
&& \opt_1 \geq &\ p_1\label{factorRevealingLP_5} \\
&& s_2 \leq &\ d + s_2^a\label{factorRevealingLP_6} \\
&& \opt_2 \geq &\ t_1 + s_2^a\label{factorRevealingLP_7} \\
&& t_1 + s_1 \leq &\ (1+\alpha)\opt_1\label{factorRevealingLP_8} \\
&& \opt_2 \geq &\ p_1 + d \quad\quad\quad\quad\quad\quad \text{or} \quad\quad \opt_2 \geq t_1 + d\label{factorRevealingLP_9} \\
&& d \geq &\ \alpha\opt_2 - \opt_1 \quad\quad \text{or} \quad\quad s_1 - p_1 \leq 2(\opt_2 - p_2)\label{factorRevealingLP_10} \\
&& x \geq &\ 0
\end{alignat}
Note that in \eqref{factorRevealingLP_9} and \eqref{factorRevealingLP_10}, at least one of the two inequalities has to be satisfied in each case.
In order to obtain an MILP, one can introduce four binary variables $b_1,\dots,b_4$ to model constraints~\eqref{factorRevealingLP_2},~\eqref{factorRevealingLP_3},~\eqref{factorRevealingLP_9}, and~\eqref{factorRevealingLP_10}.

With $M>0$ being a large enough constant, equality \eqref{factorRevealingLP_2} can be replaced by the inequalities
\begin{align*}
d \geq p_1 - p_2, \\
d \geq p_2 - p_1, \\
d \leq p_1 - p_2 + b_1\cdot M, \\
d \leq p_2 - p_1 + (1-b_1)\cdot M.
\end{align*}
Equality \eqref{factorRevealingLP_3} can be replaced by the inequalities
\begin{align*}
t_2 \geq t_1 + s_1, \\
t_2 \geq \alpha\opt_2, \\
t_2 \leq t_1 + s_1 + b_2\cdot M, \\
t_2 \leq \alpha\opt_2 + (1-b_2)\cdot M.
\end{align*}
Constraint \eqref{factorRevealingLP_9} can be replaced by the inequalities
\begin{align*}
\opt_2 \geq p_1 + d - b_3\cdot M, \\
\opt_2 \geq t_1 + d - (1-b_3)\cdot M,
\end{align*}
and, likewise, \eqref{factorRevealingLP_10} by the inequalities
\begin{align*}
d \geq \alpha\opt_2 - \opt_1 - b_4\cdot M, \\
s_1 - p_1 \leq 2(\opt_2 - p_2) + (1-b_4)\cdot M.
\end{align*}

The resulting MILP has the optimal solution
\begin{align*}
&\ \bigl(t_1,t_2,s_1,s_2,\opt_1,\opt_2,p_1,p_2,s_2^a,d,b_1,b_2,b_3,b_4\bigr) \\
=&\ \Bigl(1,\frac{\alpha+1}{\alpha},\frac{1}{\alpha},2-\alpha,\frac{1}{\alpha},1,0,2-\alpha,0,2-\alpha,0,1,1,1\Bigr)
\end{align*}
and optimal value $\max\{3+\frac{1}{\alpha}-\alpha,1+\alpha\}$.
For $\alpha=\frac{1+\sqrt{3}}{2}>1.366$, this expression is minimized.

	%
	%
	%
	
	\bibliographystyle{splncs04}
	\bibliography{Lazy}

\begin{thebibliography}{10}
\providecommand{\url}[1]{\texttt{#1}}
\providecommand{\urlprefix}{URL }
\providecommand{\doi}[1]{https://doi.org/#1}

\bibitem{AscheuerKrumkeRambau/00}
Ascheuer, N., Krumke, S.O., Rambau, J.: Online dial-a-ride problems: Minimizing
  the completion time. In: Proceedings of the 17th Annual Symposium on
  Theoretical Aspects of Computer Science (STACS). pp. 639--650 (2000)

\bibitem{AusielloDemangeLauraPaschos/04}
Ausiello, G., Demange, M., Laura, L., Paschos, V.: Algorithms for the on-line
  quota traveling salesman problem. Information Processing Letters
  \textbf{92}(2),  89--94 (2004)

\bibitem{Ausiello/01}
Ausiello, G., Feuerstein, E., Leonardi, S., Stougie, L., Talamo, M.: Algorithms
  for the on-line travelling salesman. Algorithmica  \textbf{29}(4),  560--581
  (2001)

\bibitem{AusielloAllulliBonifaciLaura/06}
Ausiello, G., Allulli, L., Bonifaci, V., Laura, L.: On-line algorithms, real
  time, the virtue of laziness, and the power of clairvoyance. In: Proceddings
  of the 3rd International Conference on Theory and Applications of Models of
  Computation (TAMC). pp. 1--20 (2006)

\bibitem{waoa}
Baligács, J., Disser, Y., Mosis, N., Weckbecker, D.: An improved algorithm for
  open online dial-a-ride. In: Proceedings of the 20th Workshop on
  Approximation and Online Algorithms (WAOA) (2022)

\bibitem{BienkowskiKL/21}
Bienkowski, M., Kraska, A., Liu, H.: Traveling repairperson, unrelated
  machines, and other stories about average completion times. In: Proceedings
  of the 48th International Colloquium on Automata, Languages, and Programming
  (ICALP). pp. 28:1--28:20 (2021)

\bibitem{BienkowskiLiu/19}
Bienkowski, M., Liu, H.: An improved online algorithm for the traveling
  repairperson problem on a line. In: Proceedings of the 44th International
  Symposium on Mathematical Foundations of Computer Science (MFCS). pp.
  6:1--6:12 (2019)

\bibitem{Birx/20}
Birx, A.: Competitive analysis of the online dial-a-ride problem. Ph.D. thesis,
  TU Darmstadt (2020)

\bibitem{BirxDisser/20}
Birx, A., Disser, Y.: Tight analysis of the smartstart algorithm for online
  dial-a-ride on the line. SIAM Journal on Discrete Mathematics
  \textbf{34}(2),  1409--1443 (2020)

\bibitem{BirxDisser/22}
Birx, A., Disser, Y., Schewior, K.: Improved bounds for open online dial-a-ride
  on the line. Algorithmica  (2022)

\bibitem{BjeldeDisserHackfeldEtal/20}
Bjelde, A., Disser, Y., Hackfeld, J., Hansknecht, C., Lipmann, M., Mei{\ss}ner,
  J., Schewior, K., Schl\"oter, M., Stougie, L.: Tight bounds for online tsp on
  the line. ACM Transactions on Algorithms  \textbf{17}(1) (2020)

\bibitem{BlomKrumkePaepeStougie/01}
Blom, M., Krumke, S.O., de~Paepe, W.E., Stougie, L.: The online {TSP} against
  fair adversaries. INFORMS Journal on Computing  \textbf{13}(2),  138--148
  (2001)

\bibitem{BonifaciStougie/08}
Bonifaci, V., Stougie, L.: Online $k$-server routing problems. Theory of
  Computing Systems  \textbf{45}(3),  470--485 (2008)

\bibitem{FeuersteinStougie/01}
Feuerstein, E., Stougie, L.: On-line single-server dial-a-ride problems.
  Theoretical Computer Science  \textbf{268}(1),  91--105 (2001)

\bibitem{HauptmeierKrumkeRambauWirth/01}
Hauptmeier, D., Krumke, S., Rambau, J., Wirth, H.C.: Euler is standing in line
  dial-a-ride problems with precedence-constraints. Discrete Applied
  Mathematics  \textbf{113}(1),  87--107 (2001)

\bibitem{HauptmeierKrumkeRabau/00}
Hauptmeier, D., Krumke, S.O., Rambau, J.: The online dial-a-ride problem under
  reasonable load. In: Proceedings of the 4th Italian Conference on Algorithms
  and Complexity (CIAC). pp. 125--136 (2000)

\bibitem{JailletLu/11}
Jaillet, P., Lu, X.: Online traveling salesman problems with service
  flexibility. Networks  \textbf{58}(2),  137--146 (2011)

\bibitem{JailletLu/14}
Jaillet, P., Lu, X.: Online traveling salesman problems with rejection options.
  Networks  \textbf{64}(2),  84--95 (2014)

\bibitem{JailletWagner/08}
Jaillet, P., Wagner, M.R.: Generalized online routing: New competitive ratios,
  resource augmentation, and asymptotic analyses. Operations Research
  \textbf{56}(3),  745--757 (2008)

\bibitem{JawgalMuralidharaSrinivasan/19}
Jawgal, V.A., Muralidhara, V.N., Srinivasan, P.S.: Online travelling salesman
  problem on a circle. In: In Proceedings of the 15th International Conference
  on Theory and Applications of Models of Computation (TAMC). pp. 325--336
  (2019)

\bibitem{Krumke0}
Krumke, S.O.: Online optimization competitive analysis and beyond. Habilitation
  thesis, Zuse Institute Berlin (2001)

\bibitem{Krumke/02}
Krumke, S.O., Laura, L., Lipmann, M., Marchetti-Spaccamela, A., de~Paepe, W.,
  Poensgen, D., Stougie, L.: Non-abusiveness helps: {A}n {O}(1)-competitive
  algorithm for minimizing the maximum flow time in the online traveling
  salesman problem. In: Proceedings of the 5th International Workshop on
  Approximation Algorithms for Combinatorial Optimization (APPROX). pp.
  200--214 (2002)

\bibitem{KrumkePaepePoensgenEtal/06}
Krumke, S.O., de~Paepe, W.E., Poensgen, D., Lipmann, M., Marchetti-Spaccamela,
  A., Stougie, L.: On minimizing the maximum flow time in the online
  dial-a-ride problem. In: Proceedings of the 3rd International Conference on
  Approximation and Online Algorithms (WAOA). pp. 258--269 (2006)

\bibitem{KrumkePaepePoensgenStougie/03}
Krumke, S.O., de~Paepe, W.E., Poensgen, D., Stougie, L.: News from the online
  traveling repairman. Theoretical Computer Science  \textbf{295}(1-3),
  279--294 (2003)

\bibitem{Lipmann/03}
Lipmann, M.: On-line routing. Ph.D. thesis, Technische Universiteit Eindhoven
  (2003)

\bibitem{Lipmann/04}
Lipmann, M., Lu, X., de~Paepe, W.E., Sitters, R.A., Stougie, L.: On-line
  dial-a-ride problems under a restricted information model. Algorithmica
  \textbf{40}(4),  319--329 (2004)

\end{thebibliography}
\end{document}